\theoremstyle{plain}
\newtheorem{theorem}{Theorem}
\newtheorem{proposition}[theorem]{Proposition}
\newtheorem{lemma}[theorem]{Lemma}
\newtheorem{notation}[theorem]{Notation}
\theoremstyle{conjecture}
\theoremstyle{definition}
\newtheorem{definition}{Definition}
\theoremstyle{remark}
\newtheorem{remark}{Remark}
\newcommand{\eqdef}{\stackrel{\text{def}}{=}}
\newcommand{\F}{\ensuremath{\mathbb{F}}}
\newcommand{\prob}{\ensuremath{\textsf{prob}}}
\newcommand{\code}[1]{\ensuremath{\mathscr{#1}}}
\newcommand{\Code}[1]{\code{#1}}
\newcommand{\word}[1]{\mathbf{#1}}
\newcommand{\pv}{\word{p}}
\newcommand{\qv}{\word{q}}
\newcommand{\rv}{\word{r}}
\newcommand{\uv}{\word{u}}
\newcommand{\vv}{\word{v}}
\newcommand{\xv}{\word{x}}
\newcommand{\yv}{\word{y}}
\newcommand{\mat}[1]{\ensuremath{\boldsymbol{#1}}}
\newcommand{\Dm}{\mat{D}}
\newcommand{\Gm}{\mat{G}}
\newcommand{\Hm}{\mat{H}}
\newcommand{\Yc}{{\mathcal Y}}
\newcommand{\fq}{\F_{q}}
\newcommand{\UV}{\left(U\mid U+V\right)}
\newcommand{\gUV}[2]{\left(#1\mid #1+#2\right)}
\newcommand{\uuv}{\left(\mathbf u \mid \mathbf u+\mathbf v\right)}
\newcommand{\guv}{\left( \mathbf u \Dm_1 + \mathbf v \Dm_2 \mid \mathbf u \Dm_3 +\mathbf v \Dm_4 \right)}
\newcommand{\AB}{\left(U_1\mid U_1+V_1 \mid U_1+U_2 \mid U_1+U_2+V_1+V_2\right)}
\newcommand{\Oq}{\mathcal O\left(\frac{1}{q}\right)}
\newcommand{\esp}{{\mathbb{E}}}
\newcommand{\var}{{\mathbf{Var}}}
\newcommand{\norm}[1]{\left|\!\left|#1\right|\!\right|}
\newcommand{\oq}{\mathcal O\left(\frac{1}{q}\right)}
\begin{document}

\title{Using Reed-Solomon codes in the $\UV$ construction and an application to cryptography}

\author{Irene M{\'a}rquez-Corbella\footnote{Inria,  Email: \texttt{irene.marquez-corbella@inria.fr}.},\quad Jean-Pierre Tillich\footnote{Inria, Email: \texttt{jean-pierre.tillich@inria.fr}.}}

\maketitle
\begin{abstract}
In this paper we present a modification of Reed-Solomon codes that beats the Guruwami-Sudan $1-\sqrt{R}$ decoding radius 
of Reed-Solomon codes at low rates $R$.  
The idea is to choose Reed-Solomon codes $U$ and $V$ with appropriate rates in a $\UV$ construction and
to decode them with the Koetter-Vardy soft information decoder.
We suggest to use a slightly more general version of these codes (but which has the same decoding performances
as the $\UV$-construction) for code-based cryptography, namely to build a McEliece scheme.
The point is here that these codes not only perform nearly as well (or even better in the low rate regime) as
Reed-Solomon codes, their structure seems to avoid the Sidelnikov-Shestakov attack which broke
a previous  McEliece proposal based on generalized Reed-Solomon codes.
\end{abstract}

\section{Introduction}\label{sec:introduction}
\paragraph{Improving upon the error correction performance of RS codes.}
Reed-Solomon(RS) codes are among the most extensively used error correcting codes. 
It has long been known how to decode them
up to half the minimum distance. This gives a decoding algorithm that 
corrects a fraction $\frac{1-R}{2}$ of errors in an RS code of  rate $R$. However, it is only in the late nineties that a breakthrough was obtained
in this setting with Sudan's algorithm \cite{S97b} and its improvement in \cite{GS99} who showed how to go beyond this barrier with an algorithm 
which in its \cite{GS99} version decodes any fraction of errors smaller than $1 - \sqrt{R}$. 
%This exceeds the minimum distance
%bound $\frac{1-R}{2}$ in the whole region of rates $[0,1)$. 
Later on, it was shown that this decoding algorithm could also be modified a little bit in order to cope with soft information on the errors \cite{KV03}. Then it was  realized 
%by Parvaresh and Vardy 
in \cite{PV05} 
that by a slight modification of RS codes and by an increase of the alphabet size it was possible to beat the $1-\sqrt{R}$ decoding radius. 
Their new family of codes is  list decodable beyond this radius for low rate.
% (specifically for $R< \frac{1}{16}$). 
Then, 
%Guruswami and Rudra 
\cite{GR06} improved on these codes by presenting a new family of codes, namely {\em folded RS codes} 
with a polynomial time decoding algorithm achieving the list decoding capacity $1-R-\epsilon$ for every rate $R$ and $\epsilon >0$.

The first purpose of this paper is to present another modification of RS codes that improves the fraction of errors
that can be corrected.  It consists in using RS codes in a $\UV$ construction. 
%In other words, we choose in this construction $U$ and $V$ to be RS codes.
 We will show that, in the low rate regime, this class of codes outperforms rather significantly a classical RS code decoded with the Guruswami and Sudan algorithm 
\cite{GS99}. The point is that this $\UV$ code can be decoded in two steps : 
\begin{enumerate}
\item First by subtracting the left part $y_1$ to the right part $y_2$ of the received vector $\left(y_1\mid y_2\right)$ 
and decoding it with respect to $V$. In such a case, we are
left with decoding a RS code with about twice as many errors. 
\item Secondly, once we have 
recovered the right part $v$ of the codeword, we can get a word $\left(y_1 \mid y_2-v\right)$ which should
match two copies of a same word $u$ of $U$. 
We can model this decoding problem by having some
soft information. 
\end{enumerate}
It turns that the last channel error model is much less noisy than the original $q$-ary symmetric channel we started with. This soft information can be used in Koetter and Vardy's decoding algorithm.
By this means we can choose $U$ to be a RS code of much bigger rate than $V$. All in all, it turns out that by choosing $U$ and $V$ with appropriate rates we can beat the $1 - \sqrt{R}$ bound in the low-rate regime.

It should be noted however that beating this $1- \sqrt{R}$ bound comes at the cost of having now an algorithm which does not work as for the aforementioned
papers \cite{S97b,GS99,PV05,GR06} for every error of a given weight (the so called adversarial error model) but with probability $1-o(1)$ for errors of a given weight. However, contrarily to \cite{PV05,GR06} which results in a significant increase of the alphabet size of the code, our alphabet size actually 
decreases when compared to a RS code: it can be half of the code length and can be even smaller when we apply this construction recursively. Indeed, we will show
that we can even improve the error correction performances by applying this construction again to the $U$ and $V$ components, i.e  
 we can choose $U$ to be a $(U_1|U_1+V_1)$ code
and we replace in the same way the 
RS code $V$ by a $(U_2|U_2+V_2)$
where $U_1, U_2, V_1, V_2$ are RS codes.
%where $U_1$ and $V_1$ are RS codes, and we replace in the same way the 
%RS code $V$ by a $(U_2|U_2+V_2)$ code where $U_2$ and $V_2$ are also RS codes.  
%This improves slightly the decoding performances again in the low rate regime.

\paragraph{Application to cryptography.} 
In a second part of the paper we show how to use such codes (or codes derived by this approach) for cryptographic purposes, i.e. in a McEliece cryptosystem \cite{M78}. Recall that this public-key cryptosystem becomes more and more fashionable due to the threats on the most popular public key cryptosystems used todays, namely RSA or DASA and ECDSA that would
be completely broken by Shor's algorithm \cite{S94a} if a large scale quantum computer could be built.
Indeed, it is unlikely that a quantum computer would be able to threaten the security of the McEliece scheme because it is based on an 
$\mathrm{NP}$-complete problem, namely  decoding a linear code.

Probably one of the main drawback of McEliece when compared to RSA, DSA or ECDSA is its rather large key size.
There have been several attempts to decrease the key size either by moving to more structured codes 
%(an approach pioneered with quasi-cyclic codes in \cite{G05}) 
or to codes which have better error correction radius \cite{N86,BBCRS14}.
 Many of the structured algebraic proposals have been broken (see for instance \cite{FOPPT16}) but some of the quasi-cyclic code families
that rely on modified LDPC codes or MDPC codes \cite{BBC13,MTSB13} seem to resist cryptanalysis up to now.
Relying on codes with better decoding performance met a similar fate, since here again many proposals of this kind have been broken.
For instance \cite{N86} suggests to replace the binary Goppa codes of the original McEliece cryptosystem by Generalized RS codes (GRS) because of their much better decoding performance, but
it got broken in \cite{SS92}.
%when Niederreiter suggested in \cite{N86} to replace in his scheme (whose security is equivalent to the McEliece scheme)
%the binary Goppa codes of the original McEliece cryptosystem by Generalized RS codes (GRS)  because of their much better decoding performance, 
%it got broken in \cite{SS92}.

There have been several attempts to repair GRS codes in this context either by adding random columns to the generator matrix of a GRS code \cite{W06}
or by multiplying this generator matrix by the inverse of a sparse matrix with small average row weight $m$ \cite{BBCRS14}. 
The \cite{W06} attempt got broken in \cite{CGGOT14} and the parameters of \cite{BBCRS14} got broken in \cite{COTG15} because $m$ was chosen to 
be too small.
The problem with the \cite{BBCRS14} approach is that the attack of \cite{COTG15} fails when $m=2$, but the solution is then 
no more competitive when compared to a Goppa code because the decoding radius gets also scaled down by a multiplicative factor of $m$ when 
compared to a GRS code.

We suggest here to revive the \cite{BBCRS14} approach with a generalized $\UV$ scheme based on RS codes that has basically the same
decoding capacitiy as a RS code and that looks in many respects like the \cite{BBCRS14} scheme with $m=2$,
This approach is also related to the approach pioneered by Wang in \cite{W15}. His code can be viewed as a certain 
subcode of our $\UV$ construction. However, the decrease of the code rate results in a significant deterioration
of the key size when compared to a code with the same error correction capacity as an RS code.

% In our case, we suggest a generalized $\UV$ code which retains the nice decoding performances of 
% this construction (which can be even better than those of a standard RS code) while avoiding attacks 
% trying to recover the code structure by looking for low weight codewords.
% It should be noted that this construction was already proposed for being used in a McEliece scheme in \cite{W15}. However, in this proposal $V$ is a RS code and $U$ is a random code. 
% Thus, all the message is encoded into the $v$ part of the codeword and nothing is encoded into the $u$ part. This implies that the ratio between the size of plaintext divided by the ciphertext length is much better in our case. For \cite{W15} it is equal to the ratio $\frac{k_V}{2n}$ whereas in our case it is $\frac{k_U+k_V}{2n}$ where $k_U,k_V$ are respectively the dimension of $U$ and $V$ and $n$ is its length.

\begin{notation}
Throughout the paper we will use the following notation.
\begin{itemize}
\item  A linear code of length $n$, dimension $k$ and distance $d$ over a finite field $\mathbb F_q$ is refered to as an $[n,k,d]_q$-code. We denote its dual by $\mathcal C^{\perp}$ which is defined to be an $[n,n-k, d^{\perp}]_q$ code.
\item The Hamming weight of a vector $\mathbf x$, denoted by $\mathrm w_H(\mathbf x)$, is defined as the number of nonzero elements.
\item For a vector $\xv$ we denote by $x(i)$ the $i$-th coordinate of $\xv$.
\end{itemize}
\end{notation}

\section{$\UV$ construction}
\label{Section3}
In this section, we recall a few facts about the $(U|U+V)$ construction and their decoding.

\begin{definition}
Let $U$ be an $[n,k_u,d_u]_q$ code and $V$ be an $[n,k_v,d_v]_q$ code.
We define the $\left(U\mid U+V\right)$-construction of $U$ and $V$ as the linear code:
$$\mathcal C = \left\{ (\mathbf u\mid\mathbf u + \mathbf v)\mid \mathbf u \in U \hbox{ and } \mathbf v \in V\right\}.$$
The code $\mathcal C$ has parameters $[2n, k_u + k_v, \min \left\{ 2d_u, d_v\right\}]_q$.
A generator matrix of $\mathcal C$ is:
$$\left( \begin{array}{c|c}G_u & G_u\\ \hline\mathbf 0 &G_v\end{array}\right)\in \mathbb F_q^{(k_u+k_v)\times 2n}$$
where $G_u$ and $G_v$ are generator matrices of $U$ and $V$ respectively.
\end{definition}

\subsection{Soft-decision decoding of $\UV$ codes}
\label{Decoding-Algo}
Let $U$ and $V$ be two codes  with parameters
$[n,k_u,d_u]_q$ and $[n,k_v,d_v]_q$, respectively
and $\mathcal C \eqdef \UV$.
Suppose we transmit the codeword $\uuv\in \mathcal C$ over a noisy channel and we receive the vector:
$\yv = (\mathbf y_1 \mid \mathbf y_2) = \uuv + (\mathbf e_1 \mid \mathbf e_2)$. 

Decoding proceeds in two steps:
\begin{enumerate}
\item We combine $\mathbf y_1$ and $\mathbf y_2$ to find $\mathbf v$. That is, we decode $\mathbf y_1 - \mathbf y_2 = \mathbf v + \mathbf e_2 - \mathbf e_1$ with respect to $V$. In the case of a soft decoder for $V$ we compute first the probability 
$$\prob(v(i)=\alpha|y_1(i),y_2(i)) ~~\hbox{ for all }~~\alpha \in \fq.$$
\item We subtract  $(\mathbf 0\mid \mathbf v)$ to $(\mathbf y_1 \mid \mathbf y_2)$ to get $(\mathbf u + \mathbf e_1 \mid \mathbf u + \mathbf e_2) = (\mathbf z_1 \mid \mathbf z_2)$. This is a noisy version of $(\uv \mid \uv)$. 
We compute now for all $\alpha \in \fq$ and all coordinates $i$ the probabilities $\prob(u(i)=\alpha|z_1(i),z_2(i))$
which is then passed to a soft decoder for $U$.
\end{enumerate}

Let us explain how these probabilities can be computed. 
We assume that the noise model is given by a discrete memoryless channel 
with input alphabet  $\mathbb F_q$
and output alphabet $\mathcal Y$. The received vector is denoted by $\mathbf y=(y(1), \ldots, y(n))\in \mathbb \Yc^n$ 
and the channel model specifies the transition probabilities with  the following matrix $\Pi_{\mathbf y}$ 
$$\Pi_{\mathbf y}^i (\alpha) = \prob (\alpha \mid y(i)) ~\hbox{ for } i=1, \ldots, n \hbox{ and } \alpha \in \fq.$$
$\Pi_{\mathbf y}^i$ denotes here the $i$-th column of $\Pi_{\mathbf y}$ and  $\Pi_{\mathbf y}^i (\alpha)$  refers to the entry in the $i$-th column and  row indexed by $\alpha \in \mathbb F_q$.

We will refer to  $\Pi$ as the $q\times n$ \textbf{reliability matrix} of the codewords symbols.
We will see below that this reliability matrix can also be obtained through the $\UV$ decoding process.
We will particularly be interested here in the $q$-ary symmetric channel model.

This channel is parametrized by the crossover probability $p$ and the channel will be denoted here by $q\hbox{-SC}_{p}$. Here each time an element from $\mathbb F_q$ is transmitted and is received either the unchanged input symbol, with probability $1-p$, or any of the other $q-1$ symbols, with probability $\frac{p}{q-1}$. In other words, 
the reliability matrix $\Pi_{\mathbf y}$ for $q\hbox{-SC}_{p}$ is defined as follows:
$$\Pi_{\mathbf y}^i (\alpha) = \prob\left(\alpha\mid y(i)\right) = \left\{ \begin{array}{ll}
1-p & \hbox{ if } \alpha = y(i)\\
\frac{p}{q-1} & \hbox{ if } \alpha \neq y(i)
\end{array}\right.$$

Thus, all columns of $\Pi_{\mathbf y}$ are identical up to permutation:
$$\Pi_{\mathbf y}^i =  \left(\begin{array}{c}
1-p \\ \frac{p}{q-1} \\ \vdots \\ \frac{p}{q-1}
\end{array}\right) \hbox{ (up to permutation)} $$
with $i=1, \ldots, n$.

Let us recall now how the reliability matrices for the decoder of $U$ and $V$ are computed from the initial reliability matrix.

\paragraph{ Reliability matrix for the $V$-decoder} We call in what follows the error model for the $V$-decoder the {\bf sum model}
and denote the associated reliability matrix by $\Pi \oplus \Pi$ when $\Pi$ is the initial reliability matrix.
Recall that before  decoding, for each symbol $X$ of $V$ that we want to decode we subtract  two symbols $X_1$ 
and $X_2$ of the $\UV$ code:
$$
X = X_2-X_1.
$$
For each of these symbols we have a reliability information $\prob(X_1=\alpha \mid Y_1)$ and 
$\prob(X_2=\beta \mid Y_2)$ where $Y_1$ and $Y_2$ are random variables that are initially the received symbols corresponding to
$X_1$ and $X_2$ after 
transmission on the noisy channel but that become sets of received symbols when we iterate the $\UV$ construction as will be seen. 
When $X_1$ and $X_2$ are uniformly distributed it can be verified that

\begin{equation*} 
\prob(X=\alpha|Y_1,Y_2)=\sum_{\beta \in \mathbb F_q} \prob(X_1=\beta |Y_1)\cdot \prob(Y_2=\alpha+\beta | Y_2)
\end{equation*}

\normalsize This leads to the following definition.
$$(\Pi\oplus \Pi)_{\mathbf y}^i (\alpha) \eqdef \sum_{\beta \in \mathbb F_q} \Pi_{\mathbf y_1}^i (\beta) \cdot \Pi_{\mathbf y_2}^i (\alpha + \beta)$$
where $\yv_1$ and $\yv_2$ are the realizations of the channel transmission of $u$ and $u+v$ respectively.
We also denote by $\Pi_{\mathbf y_1}^i \oplus \Pi_{\mathbf y_2}^i$, where each element represents a column vector,  the $i$-th column of the $\Pi \oplus \Pi$ matrix.

\paragraph{Reliability matrix for the $U$-decoder} The computation of $\prob(u(i)=\alpha|z_1(i),z_2(i))$ can be performed 
by computing the probability that a uniformly distributed random variable over $\fq$ is equal to $\alpha$ given 
two received symbols $y_1$ and $y_2$ for $X$ sent over two memoryless channels (and which are  chosen uniformly at random in $\fq$). This probability is readily seen to be 
equal to
$$
\prob(\mathcal X=\alpha \mid y_1 \hbox{ and } y_2) = 
\frac{\prob(X=\alpha \mid y_1) \cdot \prob (X=\alpha \mid y_2)}{\sum_{\beta\in \mathbb F_q} \prob(X=\beta \mid y_1) \cdot \prob ( X=\beta \mid y_2)}$$

We denote by $\Pi \times \Pi$ the reliability matrix (the input) to a soft-decision decoding algorithm for the code $U$.  
Thus, each element of the reliability matrix $\Pi \times \Pi$ related to the aforementioned 
quantities $\yv$ and $\vv$  is defined by:
$$( \Pi\times \Pi)_{\mathbf y, \vv}^i(\alpha) = \frac{\Pi_{\yv_1}^i (\alpha) \cdot \Pi_{\mathbf \yv_2}^i (\alpha+v(i))}{\sum_{\beta \in \mathbb F_q} \Pi_{\mathbf \yv_1}^i  (\beta ) \cdot \Pi_{\mathbf \yv_2}^i(\beta+v(i))}.$$
To simplify notation we will generally avoid the dependency on $\vv$ and simply write $( \Pi\times \Pi)_{\mathbf y}$ .

\subsection{Algebraic-soft decision decoding of RS codes}

Let us recall how the Koetter-Vardy soft decoder \cite{KV03} can be analyzed.
 By \cite[Theorem 12]{KV03} their decoding algorithm outputs a list that contains the codeword $\mathbf c\in C$ if

$$\frac{\left\langle \Pi, \lfloor \mathbf c\rfloor \right\rangle}{\sqrt{\left\langle \Pi, \Pi\right\rangle}} \geq \sqrt{k-1}+o(1)$$
as the codelength $n$ tends to infinity,
where $\lfloor \mathbf c\rfloor$ represents a $q\times n$ matrix with entries $c_{i,\alpha} = 1$ if $c_i = \alpha$, and $0$ otherwise; and
$\left\langle A, B\right\rangle$.
 denotes the inner product of the two $q\times n$ matrices $A$ and $B$, i.e. 
$$\left\langle A, B\right\rangle \eqdef \sum_{i=1}^q\sum_{j=1}^n a_{i,j}b_{i,j}.$$
The algorithm uses a parameter $s$ (the total number of interpolation points counted with multiplicity). The Little-O $o(1)$ depends on 
the choice of this parameter and the parameters $n$ and $q$. 
It can be chosen as a function of $q$ in such a way that, when $\left\langle \Pi, \Pi\right\rangle$
has a lower bound given by some positive constant then, the Little-O of this formula is bounded from above
by a function of $q$ that goes to $0$ as $q$ goes to infinity. 
We will consider here only discrete symmetric channel models that are defined below. Let us first introduce some notation.
\begin{notation}[Probability error vector of a Discrete Memoryless Channel (DMC)]
For a given DMC with $q$-ary inputs we denote by  $\pi$  the probability vector
$\pi=(\prob(x=\alpha|y))_{\alpha \in \fq}$ where $x$ is the symbol that has been sent through the channel and $y$ is the received 
symbol.  For a vector $\xv = (x(\beta))_{\beta \in \fq}$ we denote
by $\xv^{+\alpha}$ the vector $\xv^{+\alpha}=(x(\beta+\alpha))_{\beta \in \fq}$.
\end{notation}

By viewing $\pi$ as a random variable 
(namely as a function of the random variable $y$), 
we define as in \cite{BB06}
a symmetric channel by
\begin{definition}[discrete symmetric channel with $q$-ary inputs]
A DMC with $q$-ary inputs is said to be symmetric if and only if for any $\alpha$ in $\fq$ we have
\begin{equation}
\label{eq:symmetry} 
p(\alpha) \prob(\pi = \pv) = p(0) \prob(\pi = \pv^{+\alpha}).
\end{equation} 
\end{definition}
Note that this implies that in a discrete symmetric channel, for any possible realization $\pv$ of the probability vector $\pi$ 
(i.e. when $\prob(\pi=\pv) \neq 0$) 
we necessarily have $p(0)\neq 0$, since otherwise we would have $p(\alpha)=0$ for all $\alpha \neq 0$, a contradiction with the 
fact that $\pv$ is a probability vector $\sum_{\alpha \in \fq} p(\alpha) = 1$.
It is proved in \cite{BB06} that symmetric channels are closed under the $\oplus$ and $\times$ operations on channels
defined in Subsection \ref{Decoding-Algo}.
We give now the asymptotic behavior for a symmetric channel of 
the Koetter-Vardy decoder, but before doing this we will need a few lemmas. 

\begin{lemma}\label{lem:simple_formula}
Let $\pi=(\pi(\alpha))_{\alpha \in \fq}$ be the probability vector associated to a discrete symmetric channel with $q$-ary inputs. Then 
$$
\esp(\pi(0)) = \esp\left(\norm{\pi}^2\right),
~~~~\hbox{ with } \norm{\pi}^2 \eqdef \sum_{\alpha \in \fq} \pi(\alpha)^2.$$
\end{lemma}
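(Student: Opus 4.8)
The plan is to unwind both sides into sums over the possible values of the discrete random vector $\pi$ and to match them term by term using only the defining symmetry relation \eqref{eq:symmetry}. Writing a generic value of $\pi$ as $\pv=(p(\beta))_{\beta\in\fq}$, we have $\esp(\pi(0)) = \sum_{\pv} p(0)\,\prob(\pi=\pv)$ and $\esp(\norm{\pi}^2) = \sum_{\pv}\bigl(\sum_{\alpha\in\fq}p(\alpha)^2\bigr)\prob(\pi=\pv)$; all these sums have nonnegative terms and are bounded (indeed $\norm{\pi}^2\le 1$ since $\pi$ is a probability vector), so any rearrangement is legitimate. First I would swap the two summations in the second identity, reducing the claim to evaluating $\sum_{\pv}p(\alpha)^2\,\prob(\pi=\pv)$ for each fixed $\alpha\in\fq$.

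For that, I would peel off one factor, $p(\alpha)^2\,\prob(\pi=\pv)=p(\alpha)\cdot\bigl(p(\alpha)\,\prob(\pi=\pv)\bigr)$, and rewrite the bracket with \eqref{eq:symmetry} as $p(0)\,\prob(\pi=\pv^{+\alpha})$. Then I would substitute $\qv\eqdef\pv^{+\alpha}$: since $\xv\mapsto\xv^{+\alpha}$ is a bijection of the set of probability vectors with inverse $\qv\mapsto\qv^{-\alpha}$, and since reading off the components of $\pv=\qv^{-\alpha}$ gives $p(\alpha)=q(0)$ and $p(0)=q(-\alpha)$, this turns the sum into $\sum_{\qv} q(0)\,q(-\alpha)\,\prob(\pi=\qv)$. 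The one point needing care is that \eqref{eq:symmetry} is stated for genuine realizations (where $p(0)\ne 0$, as already remarked), so I would first check that it extends to all probability vectors under the convention $\prob(\pi=\pv)=0$ off the support of $\pi$; this extension is a one-line verification (if $\pv^{+\alpha}$ is a realization while $\pv$ is not, apply the symmetry to $\pv^{+\alpha}$ with shift $-\alpha$), and it makes the change of variables unproblematic.

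It then remains to sum over $\alpha$ before summing over $\qv$:
\begin{align*}
\esp(\norm{\pi}^2) &= \sum_{\alpha\in\fq}\sum_{\qv}q(0)\,q(-\alpha)\,\prob(\pi=\qv) = \sum_{\qv}q(0)\Bigl(\sum_{\alpha\in\fq}q(-\alpha)\Bigr)\prob(\pi=\qv)\\
&= \sum_{\qv}q(0)\,\prob(\pi=\qv)=\esp(\pi(0)),
\end{align*}
where the middle step uses that $\alpha\mapsto-\alpha$ permutes $\fq$, so $\sum_{\alpha\in\fq}q(-\alpha)=\sum_{\beta\in\fq}q(\beta)=1$. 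The computation itself is short; the only real obstacle is the bookkeeping around \eqref{eq:symmetry} — applying it in the correct direction, tracking the shifts consistently with the convention $\xv^{+\alpha}=(x(\beta+\alpha))_{\beta}$, and disposing of the $p(0)\ne 0$ caveat. If one prefers to avoid enumerating values of $\pi$, the same mechanism can be packaged as the identity $\esp\bigl(\pi(\alpha)\,f(\pi)\bigr)=\esp\bigl(\pi(-\alpha)\,f(\pi^{-\alpha})\bigr)$, valid for every $\alpha\in\fq$ and every function $f$; taking $f(\pi)=\pi(\alpha)$ gives $\esp(\pi(\alpha)^2)=\esp(\pi(0)\,\pi(-\alpha))$, and summing over $\alpha$ concludes exactly as above.
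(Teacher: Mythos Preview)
Your proof is correct and complete; the change of variables $\qv=\pv^{+\alpha}$ together with \eqref{eq:symmetry} does exactly what is needed, and your handling of the support issue is sound.

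It is, however, organized differently from the paper's argument. The paper first groups the possible values of $\pi$ into orbits under the shift action $\pv\mapsto\pv^{+\alpha}$, picks a representative $\rv$ of each orbit, and uses \eqref{eq:symmetry} to show that the conditional law of $\pi$ on an orbit is $\prob(\pi=\rv^{+\alpha}\mid \pi\in c(\rv))=r(\alpha)$; from this it computes $\esp(\pi(0)\mid \pi\in c(\rv))=\sum_\alpha r(\alpha)^2=\norm{\rv}^2$ and then averages over orbits. Your proof bypasses the orbit decomposition entirely: you apply \eqref{eq:symmetry} once to peel off a factor $p(\alpha)$, reindex by the global bijection $\pv\mapsto\pv^{+\alpha}$, and collapse the $\alpha$-sum using $\sum_\alpha q(-\alpha)=1$. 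Your route is shorter and more elementary; the paper's route has the side benefit of exhibiting the explicit within-orbit distribution, which is a mildly stronger structural statement even though it is not needed for the lemma itself.
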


\begin{proof}
Let us begin by observing that for a symmetric channel if $\pv$ is a possible realization of the probability vector $\pi$, then
$\pv^{+\alpha}$ is also a possible realization of this probability vector as soon as $p(\alpha) \neq 0$. This motivates to introduce
the equivalence relation between probability vectors over $\fq$ :
$$\pv \equiv \qv \hbox{ iff there exists }\alpha \in \fq \hbox{ for which }\pv^{+\alpha}=\qv.$$
We consider for our DMC the set of all equivalence classes of the probability vector $\pi$ and
denote by $\mathcal R$ a set of representatives of such equivalence classes. For a representative $\rv$ of an equivalence
class we denote by $c(\rv)$ the class to which it belongs. Observe now that for such a representative $\rv=(r(\beta))_{\beta \in \fq}$ we have 
\begin{equation}
\prob\left(\pi=\rv^{+\alpha}|\pi \in c(\rv)\right)  =  \sum_{\alpha \in \fq } \frac{K r(\alpha)}{r(0)}
\end{equation}
for some constant $K>0$ by using \eqref{eq:symmetry}.
Since
$$\begin{array}{ccc}
\displaystyle
\sum_{\alpha \in \fq} \prob\left(\pi=\rv^{+\alpha}|\pi \in c(\rv)\right) = 1 &
\hbox{ and }&
\displaystyle
\sum_{\alpha \in \fq} r(\alpha)  =  1
\end{array}$$
we necessarily have $K=r(0)$. Therefore,
\begin{eqnarray*}
\sum_{\alpha \in \fq} r^{+\alpha}(0) \prob\left(\pi=\rv^{+\alpha}|\pi \in c(\rv)\right) & = & \sum_{\alpha \in \fq } r(\alpha) \frac{K r(\alpha)}{r(0)} = 
\sum_{\alpha \in \fq } r(\alpha)^2.
\end{eqnarray*}
This implies that 
\begin{eqnarray*}
\esp(\pi(0)) & = & \sum_{\rv \in {\mathcal R}} \prob(\pi \in c(\rv)) \sum_{\alpha \in \fq} r^{+\alpha}(0) \prob\left(\pi=\rv^{+\alpha}|\pi \in c(\rv)\right) \\
& = & \sum_{\rv \in {\mathcal R}} \prob(\pi \in c(\rv))   \sum_{\alpha \in \fq } r(\alpha)^2 \\
& = & \sum_{\rv \in {\mathcal R}} \prob(\pi \in c(\rv)) \norm{\rv}^2  =  \esp\left( \norm{\pi}^2\right).
\end{eqnarray*}
\end{proof}

Let us recall the Chebyshev inequality which says that for any random variable $X$ we have
\begin{equation}\label{eq:second_moment}
\prob(|X- \esp(X)| \geq t) \leq \frac{\var(X)}{t^2}
\end{equation}
We are going to use this result with $X=\left\langle \Pi, \lfloor \mathbf 0 \rfloor \right\rangle$
and $X= \left\langle \Pi, \Pi \right\rangle$. This leads to the following concentration results.

\begin{lemma}\label{lem:concentration}
Let $\epsilon > 0$. We have
\begin{eqnarray}
\prob\left( \left\langle \Pi, \lfloor \mathbf 0 \rfloor \right\rangle \leq (1-\epsilon)  n \esp(\norm{\pi}^2)\right) 
&\leq& \frac{1}{n \epsilon^2  \left( \esp(\norm{\pi}^2 \right)^2}\label{eq:concentration_num}\\
\prob\left( \left\langle \Pi, \Pi \right\rangle \geq (1+\epsilon)  n \esp(\norm{\pi}^2)\right) 
&\leq& \frac{1}{n \epsilon^2  \left( \esp(\norm{\pi}^2 \right)^2}\label{eq:concentration_den}
\end{eqnarray}
\end{lemma}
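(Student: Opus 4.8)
The plan is to recognise that the two quantities under study are sums of $n$ independent and identically distributed terms, each lying in $[0,1]$, and then to invoke Chebyshev's inequality \eqref{eq:second_moment}.

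First I would note that, since the channel under consideration is memoryless, the columns $\Pi^1,\dots,\Pi^n$ of the reliability matrix $\Pi$ are i.i.d.\ copies of the probability vector $\pi$; here the symmetry \eqref{eq:symmetry} lets us assume, without loss of generality, that the all-zero codeword was transmitted, which is precisely the codeword appearing through $\lfloor\mathbf 0\rfloor$. Unfolding the definition of the inner product gives
\[
\left\langle \Pi,\lfloor\mathbf 0\rfloor\right\rangle=\sum_{i=1}^n\Pi^i(0),
\qquad
\left\langle \Pi,\Pi\right\rangle=\sum_{i=1}^n\norm{\Pi^i}^2.
\]
Taking expectations, both random variables have the \emph{same} mean $n\,\esp(\norm{\pi}^2)$: for the second this is immediate, and for the first it is exactly Lemma~\ref{lem:simple_formula}. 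Note also that $\esp(\norm{\pi}^2)>0$, so the bounds in the statement are meaningful.

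Then I would bound the two variances. Since $0\le\pi(0)\le 1$ we have $\pi(0)^2\le\pi(0)$, hence $\var(\pi(0))\le\esp(\pi(0))=\esp(\norm{\pi}^2)$ by Lemma~\ref{lem:simple_formula}; and since $0\le\norm{\pi}^2=\sum_{\alpha\in\fq}\pi(\alpha)^2\le\sum_{\alpha\in\fq}\pi(\alpha)=1$, the same argument gives $\var(\norm{\pi}^2)\le\esp(\norm{\pi}^2)$. By independence of the $n$ columns, each of $\var\bigl(\langle\Pi,\lfloor\mathbf 0\rfloor\rangle\bigr)$ and $\var\bigl(\langle\Pi,\Pi\rangle\bigr)$ is at most $n\,\esp(\norm{\pi}^2)$.

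Finally I would apply \eqref{eq:second_moment} to each variable with $t=\epsilon\,n\,\esp(\norm{\pi}^2)$. The event $\langle\Pi,\lfloor\mathbf 0\rfloor\rangle\le(1-\epsilon)\,n\,\esp(\norm{\pi}^2)$ (resp.\ $\langle\Pi,\Pi\rangle\ge(1+\epsilon)\,n\,\esp(\norm{\pi}^2)$) is contained in $\{|X-\esp(X)|\ge t\}$, so its probability is at most $n\,\esp(\norm{\pi}^2)/t^2=1/\bigl(n\epsilon^2\,\esp(\norm{\pi}^2)\bigr)$; using the bound $\esp(\norm{\pi}^2)\le 1$ obtained above, this is in turn at most $1/\bigl(n\epsilon^2(\esp(\norm{\pi}^2))^2\bigr)$, which is \eqref{eq:concentration_num} (resp.\ \eqref{eq:concentration_den}). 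I do not anticipate a genuine difficulty here; the only points deserving care are the reduction to the zero codeword, the use of Lemma~\ref{lem:simple_formula} to make the two means coincide, and the final harmless relaxation using $\esp(\norm{\pi}^2)\le 1$ (one could alternatively tighten the constants with the one-sided Chebyshev inequality, but the stated bounds already follow from the two-sided version).
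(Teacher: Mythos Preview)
Your proof is correct and follows essentially the same route as the paper: decompose each inner product as a sum of $n$ i.i.d.\ terms, compute the mean via Lemma~\ref{lem:simple_formula}, bound the variance using that each term lies in $[0,1]$, and apply Chebyshev~\eqref{eq:second_moment}. The only cosmetic difference is that the paper bounds each summand's variance crudely by $1$ (so $\var\le n$), whereas you use the slightly sharper $\var(\pi(0))\le\esp(\pi(0))=\esp(\norm{\pi}^2)$ and then relax at the end with $\esp(\norm{\pi}^2)\le 1$; both lead to the stated bound.
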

\begin{proof}
Let us first prove \eqref{eq:concentration_num}. 
First observe that 
$$
\left\langle \Pi, \lfloor \mathbf 0 \rfloor \right\rangle = \sum_{i=1}^n \Pi^i(0)
$$
By linearity of expectation and Lemma \ref{lem:simple_formula} we have
\begin{equation}\label{eq:expectation1}
\esp\left\{ \left\langle \Pi, \lfloor \mathbf 0 \rfloor \right\rangle\right\} = n \esp(\pi(0)) = 
n \esp\left( \norm{\pi}^2 \right).
\end{equation}
Since the column vectors $\Pi^1(0),\Pi^2(0),\dots,\Pi^n(0)$ are independent random variables we also obtain
\begin{equation}
\label{eq:variance1}
\var\left\{ \left\langle \Pi, \lfloor \mathbf 0 \rfloor \right\rangle\right\}
= \var\left( \sum_{i=1}^n \Pi^i(0) \right)
= n \var\left( \pi(0) \right) \leq n.
\end{equation}
From this we deduce
\begin{eqnarray*}
\prob\left( \left\langle \Pi, \lfloor \mathbf 0 \rfloor \right\rangle \leq (1-\epsilon)  n \esp(\norm{\pi}^2)\right) 
&\leq& \prob\left( \left| \left\langle \Pi, \lfloor \mathbf 0 \rfloor \right\rangle - n \esp(\norm{\pi}^2) \right| \geq \epsilon   n \esp(\norm{\pi}^2)\right) \text{  }\\
& \leq & \frac{\var\left\{ \left\langle \Pi, \lfloor \mathbf 0 \rfloor \right\rangle\right\}}{\epsilon^2 n^2 \left(\esp(\norm{\pi}^2)\right)^2}
\text{  by \eqref{eq:second_moment} and  \eqref{eq:expectation1}}\\
& \leq & \frac{1}{\epsilon^2 n \left(\esp(\norm{\pi}^2)\right)^2}
\text{  by \eqref{eq:variance1}}
\end{eqnarray*}
This proves \eqref{eq:concentration_num}. The second statement follows by similar considerations. We have in this case
\begin{eqnarray}
\esp\left\{ \left\langle \Pi, \Pi \right\rangle\right\} & = &  n \esp\left( \norm(\pi)^2 \right) \label{eq:expectation2}\\
\var\left\{ \left\langle \Pi, \Pi \right\rangle \right\} & \leq & n\label{eq:variance2}
\end{eqnarray}
This can be used to prove that
\begin{eqnarray*}
\prob\left( \left\langle \Pi, \Pi \right\rangle \geq (1+\epsilon)  n \esp(\norm{\pi}^2)\right) 
&\leq& \prob\left( \left| \left\langle \Pi, \Pi \right\rangle - n \esp(\norm{\pi}^2) \right| \geq \epsilon   n \esp(\norm{\pi}^2)\right) \text{  }\\
& \leq & \frac{\var\left\{ \left\langle \Pi, \Pi \right\rangle\right\}}{\epsilon^2 n^2 \left(\esp(\norm{\pi}^2)\right)^2}
\text{  by \eqref{eq:second_moment} and \eqref{eq:expectation2}}\\
& \leq & \frac{1}{\epsilon^2 n \left(\esp(\norm{\pi}^2)\right)^2}
\text{  by \eqref{eq:variance2}}
\end{eqnarray*}
\end{proof}

We are ready now to prove the following theorem which gives a (tight) lower bound on the 
error-correction capacity of the Koetter-Vardy decoding algorithm over a discrete memoryless channel.
\begin{theorem}\label{th:KVP}
Let $(\code{C}_n)_{n \geq 1}$ be an infinite family of Reed-Solomon codes of rate $\leq R$. Denote by $q_n$ the alphabet size
of $\code{C}_n$ that is assumed to be a non decreasing sequence that goes to infinity with $n$.
Consider an infinite family of $q_n$-ary symmetric channels with associated probability error vectors $\pi_n$ such 
that $\esp\left( \norm{\pi_n}^2 \right)$ has a limit as $n$ tends to infinity.
Let 
$$
C_{\text{KV}} \eqdef \lim_{n \rightarrow \infty}  \esp\left( \norm{\pi_n}^2 \right).
$$
This infinite family of codes  can be decoded correctly 
by the Koetter-Vardy decoding algorithm with probability $1-o(1)$ 
as $n$ tends to infinity as soon as there exists 
$\epsilon >0$ such that 
$$
R \leq  C_{\text{KV}} -\epsilon.
$$
\end{theorem}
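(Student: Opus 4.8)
The plan is to combine the Koetter--Vardy sufficient condition recalled above with the two concentration estimates of Lemma~\ref{lem:concentration}. Recall that the algorithm returns a list containing $\cv \in \code{C}_n$ provided
$$
\frac{\left\langle \Pi, \lfloor \cv \rfloor \right\rangle}{\sqrt{\left\langle \Pi, \Pi \right\rangle}} \geq \sqrt{k_n-1} + o(1),
$$
where $k_n \leq R n$ is the dimension of $\code{C}_n$. By the symmetry of the channel it suffices to analyze the case $\cv = \mathbf 0$, so the numerator is $\left\langle \Pi, \lfloor \mathbf 0 \rfloor \right\rangle$. First I would fix an auxiliary $\epsilon' > 0$ (small compared to the $\epsilon$ in the statement, to be calibrated at the end) and invoke \eqref{eq:concentration_num} and \eqref{eq:concentration_den}: with probability at least $1 - \frac{2}{n (\epsilon')^2 (\esp \norm{\pi_n}^2)^2}$ we simultaneously have
$$
\left\langle \Pi, \lfloor \mathbf 0 \rfloor \right\rangle \geq (1-\epsilon') n \esp(\norm{\pi_n}^2)
\quad\text{and}\quad
\left\langle \Pi, \Pi \right\rangle \leq (1+\epsilon') n \esp(\norm{\pi_n}^2).
$$
Since $\esp(\norm{\pi_n}^2) \to C_{\text{KV}} > 0$, the failure probability of this event tends to $0$ as $n \to \infty$.

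On that good event, the left-hand side of the Koetter--Vardy condition is bounded below by
$$
\frac{(1-\epsilon') n \esp(\norm{\pi_n}^2)}{\sqrt{(1+\epsilon') n \esp(\norm{\pi_n}^2)}}
= \frac{1-\epsilon'}{\sqrt{1+\epsilon'}} \sqrt{n\, \esp(\norm{\pi_n}^2)}.
$$
So it is enough to check that this quantity is at least $\sqrt{k_n - 1} + o(1)$. Using $k_n - 1 \leq Rn$ and the hypothesis $R \leq C_{\text{KV}} - \epsilon$, together with $\esp(\norm{\pi_n}^2) \to C_{\text{KV}}$, one sees that for $n$ large enough $\esp(\norm{\pi_n}^2) \geq C_{\text{KV}} - \epsilon/2 \geq R + \epsilon/2$, hence $n\,\esp(\norm{\pi_n}^2) \geq (R+\epsilon/2)n \geq k_n - 1 + (\epsilon/2)n$. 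Therefore
$$
\frac{1-\epsilon'}{\sqrt{1+\epsilon'}}\sqrt{n\,\esp(\norm{\pi_n}^2)} \geq \frac{1-\epsilon'}{\sqrt{1+\epsilon'}}\sqrt{(R+\epsilon/2)n},
$$
and since $\sqrt{k_n - 1} \leq \sqrt{Rn}$, the desired inequality holds as soon as $\frac{1-\epsilon'}{\sqrt{1+\epsilon'}}\sqrt{R+\epsilon/2} \geq \sqrt{R} + o(1)$, which is true for all sufficiently small $\epsilon'$ (depending only on $R$ and $\epsilon$) because at $\epsilon' = 0$ the left side is $\sqrt{R+\epsilon/2} > \sqrt{R}$ strictly; the slack absorbs the $o(1)$ term.

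The one genuine subtlety — and the step I would be most careful about — is the $o(1)$ appearing in the Koetter--Vardy condition itself. As emphasized in the discussion of \cite[Theorem~12]{KV03} above, this term depends on the interpolation multiplicity parameter $s$ as well as on $n$ and $q_n$, and the cited remark guarantees it can be made uniformly small \emph{as $q_n \to \infty$} provided $\left\langle \Pi, \Pi \right\rangle$ is bounded below by a positive constant. Here $\left\langle \Pi, \Pi \right\rangle$ concentrates around $n\,\esp(\norm{\pi_n}^2) \to \infty$, so on the good event it is certainly bounded below by the positive constant $C_{\text{KV}}/2$ for $n$ large; since $q_n \to \infty$ by hypothesis, the $o(1)$ term is genuinely $o(1)$ and is dominated by the strict slack $\sqrt{R+\epsilon/2} - \sqrt{R}$ exhibited above. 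Assembling the pieces: choose $\epsilon'$ small enough for the arithmetic, note the good event has probability $1 - o(1)$, and on it the Koetter--Vardy list contains the transmitted codeword; a final uniqueness/pruning remark (or simply reporting that the sent word is on the list, as is standard for list-decoding statements) completes the argument.
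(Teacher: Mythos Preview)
Your proof is correct and follows essentially the same route as the paper: reduce to the zero codeword by channel symmetry, apply both bounds of Lemma~\ref{lem:concentration} simultaneously (union bound), lower-bound the Koetter--Vardy ratio by $\frac{1-\epsilon'}{\sqrt{1+\epsilon'}}\sqrt{N\,\esp(\norm{\pi_n}^2)}$, and compare with $\sqrt{k_n-1}$ using $R \leq C_{\text{KV}}-\epsilon$ and $\esp(\norm{\pi_n}^2)\to C_{\text{KV}}$. The only cosmetic differences are that the paper tracks the constants via the elementary inequalities $\sqrt{1-x}\leq 1-x/2$ and $\frac{1-x}{\sqrt{1+x}}\geq 1-2x$ for small $x$, whereas you use a cleaner continuity argument at $\epsilon'=0$; and your explicit discussion of why the $o(1)$ in the Koetter--Vardy condition is absorbed (lower bound on $\langle\Pi,\Pi\rangle$, $q_n\to\infty$) is in fact more careful than the paper's one-line ``choose $s$ appropriately''.
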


\begin{remark}
Let us observe that for the $q\hbox{-SC}_{p}$ we have
$$
\esp\left( \norm{\pi}^2 \right) = (1-p)^2 +(q-1)\frac{p^2}{(q-1)^2} = (1-p)^2 + \Oq.
$$ 
By letting $q$ going to infinity, we recover in this way the performance of the Guruswami-Sudan algorithm which works as soon
as 
$R < (1-p)^2$.
\end{remark}

\begin{proof}[Proof of Theorem \ref{th:KVP}]
Without loss of generality we may assume that the codeword that was sent is the zero codeword $\mathbf 0$.
Let $n_0$ be such that 
\begin{equation}\label{eq:bounds}
C_{\text{KV}} - \frac{\epsilon}{2} \leq \esp\left( \norm{\pi_n}^2 \right) \leq C_{\text{KV}} + \frac{\epsilon}{2}
\end{equation}
 for any $n \geq n_0$.
For $n \geq n_0$ we can write that the rate $R_n$ of $\Code{C}_n$
satisfies
\begin{eqnarray}
R_n & \leq & C_{\text{KV}} - \epsilon \nonumber \\
& \leq &   \esp\left( \norm{\pi_n}^2 \right) - \epsilon/2 \label{eq:bound} 
\end{eqnarray}
Let $K$ and $N$ be the dimension and the length of $\code{C}_n$.
In such a case we have
\begin{eqnarray}
\sqrt{\frac{K-1}{N}} &\leq & \sqrt{\frac{K}{N}} \nonumber\\
& \leq & \sqrt{\esp\left( \norm{\pi_n}^2 \right) - \epsilon/2} \;\;\;\text{(by \eqref{eq:bound})} \nonumber\\
& \leq & \sqrt{\esp\left( \norm{\pi_n}^2 \right)} \sqrt{1 - \frac{\epsilon}{2 \esp\left( \norm{\pi_n}^2 \right)}} \nonumber\\
& \leq &  \sqrt{\esp\left( \norm{\pi_n}^2 \right)}  \left( 1 - \frac{\epsilon}{4 \esp\left( \norm{\pi_n}^2 \right)}\right) 
\;\;\text{(since $\sqrt{1-x} \leq 1-\frac{x}{2}$.)}\nonumber\\
& \leq & \sqrt{\esp\left( \norm{\pi_n}^2 \right)}  \left( 1 - \frac{\epsilon}{4 C_{\text{KV}}  + 2\epsilon}\right) \label{eq:final}
\;\;\text{(by \eqref{eq:bounds})}
\end{eqnarray}
Let $\delta$ be a positive constant that we are going to choose afterwards.
Note that if an event $\mathcal E_1$ has probability $\geq 1 - \epsilon_1$ 
and another event $\mathcal E_2$ has probability $\geq 1 - \epsilon_2$, 
then 
$$\prob(\mathcal E_1 \cap \mathcal E_2) = \prob(\mathcal E_1) + 
\prob(\mathcal E_2) - \prob(\mathcal E_1 \cup \mathcal E_2) \geq  1 - \epsilon_1 + 1 - \epsilon_2 -1
= 1-\epsilon_1-\epsilon_2.$$
We can use this remark together with  Lemma \ref{lem:concentration} to deduce that with probability greater than or equal to
$1 - \frac{2}{N \delta^2  \left( \esp(\norm{\pi_n}^2 \right)^2}$ we have at the same time 
\begin{eqnarray}
\left\langle \Pi_n, \lfloor \mathbf 0 \rfloor \right\rangle & \geq & (1-\delta)  N \esp(\norm{\pi_n}^2) \label{eq:cond1}\\ 
 \left\langle \Pi_n, \Pi_n \right\rangle & \leq & (1+\delta)  N \esp(\norm{\pi_n}^2) \label{eq:cond2}
\end{eqnarray}
In such a case we have
\begin{equation}\label{eq:final2}
\frac{\left\langle \Pi_n, \lfloor \mathbf 0 \rfloor \right\rangle}{\sqrt{ \left\langle \Pi_n, \Pi_n \right\rangle}}
\geq \frac{1-\delta}{\sqrt{1+\delta}} \sqrt{N} \sqrt{\esp(\norm{\pi_n}^2)}
\end{equation}
There exists $x_0 >0$ such that for every $x \in [0,x_0]$ we have
$$
\frac{1-x}{\sqrt{1+x}} \leq 1-2x.
$$
Therefore for $\delta  \leq x_0$ we have in the aforementioned case
\begin{equation}
\frac{\left\langle \Pi_n, \lfloor \mathbf 0 \rfloor \right\rangle}
{\sqrt{\left\langle \Pi_n, \Pi_n \right\rangle}} \geq (1-2\delta) \sqrt{N} \sqrt{\esp(\norm{\pi_n}^2)}
\end{equation}
Let us choose now $\delta$ such that 
$$
\delta = \text{min}\left( x_0,\frac{\epsilon'}{ 4} \right).
$$
where $\epsilon' \eqdef \frac{\epsilon}{4 C_{\text{KV}}  + 2\epsilon}$. 
This choice implies 
\begin{eqnarray}
\frac{\left\langle \Pi_n, \lfloor \mathbf 0 \rfloor \right\rangle}
{\sqrt{\left\langle \Pi_n, \Pi_n \right\rangle}} & \geq &(1-\epsilon'/2) \sqrt{N} \sqrt{\esp(\norm{\pi_n}^2)} \nonumber\\
& \geq & \frac{1- \epsilon'/2}{1-\epsilon'} \sqrt{\frac{K-1}{N}}\label{eq:final3}
\end{eqnarray}
where we used \eqref{eq:final} for the last inequality.
Therefore we deduce that in the aforementioned case
(i.e. when \eqref{eq:cond1} and \eqref{eq:cond2} both hold), that we can choose $s$ appropriately in the Koetter-Vardy algorithm so that 
the codeword $\mathbf 0$ is in the list output by the algorithm.
The probability that \eqref{eq:final3} is satisfied is greater than or equal to 
$1 - \frac{2}{N \delta^2  \left( \esp(\norm{\pi_n}^2 \right)^2}$ which is also 
greater than or equal to (by using \eqref{eq:bounds})
$1 - \frac{2}{N \delta^2  \left( C_{\text{KV}} - \frac{\epsilon}{2} \right)^2}
$ which goes to $1$ as $N$ goes to infinity.
\end{proof}

\section{Correcting errors beyond the Guruswami-Sudan bound}
\label{Section4}
\subsection{The $\UV$-construction}
Now suppose we choose $U$ and $V$ as RS codes in a $\UV$ construction.
We start with a $q$-ary symmetric channel with error probability $p$.
% ($q\hbox{-SC}_{p}$). 
%On the following we obtain the channel error models for decoding $U$ and V. Recall that their reliability matrices are given by $\Pi_1=\Pi%\times \Pi$ and $\Pi_2=\Pi \oplus \Pi$.
Recall that the reliability matrix for the $U$-decoder is $\Pi_1=\Pi\times \Pi$
whereas for the $V$-decoder it is  $\Pi_2=\Pi \oplus \Pi$.
\begin{lemma}
\label{Lemma-UV}
Let $\pi_U$ and $\pi_V$ be the probability vectors corresponding to decoding the codes $U$ and $V$ respectively. 
\begin{itemize}
\item The channel error model of the code $V$ is a $q\hbox{-SC}_{p'}$ with $p'=2p-p^2$ and
$$\esp\left( \norm{\pi_{V}}^2 \right) = (1-p)^4 + \Oq.$$
\item For the channel error model of the code $U$ we have
$$\esp\left( \norm{\pi_{U}}^2 \right) =\frac{(p+2)(p-1)^2}{2-p} +  \Oq.$$
\end{itemize}
\end{lemma}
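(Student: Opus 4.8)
The plan is to compute the two reliability vectors $\pi_U$ and $\pi_V$ directly from the definitions of the $\oplus$ and $\times$ operations in Subsection~\ref{Decoding-Algo}, exploiting the rigid structure of the reliability matrix of a $q$-ary symmetric channel: up to permutation of its rows every column is the fixed vector with one entry $1-p$ and $q-1$ entries $\frac{p}{q-1}$. Assume the all-zero codeword was sent (and that $V$ was recovered correctly, so that the symbol subtracted in step~2 of the decoder is the true one). Then in coordinate $i$ the received symbols are $y_1(i)=e_1(i)=:a$ and $y_2(i)=e_2(i)=:b$, where $a,b$ are independent, each equal to $0$ with probability $1-p$ and to any prescribed nonzero element with probability $\frac{p}{q-1}$, and $v(i)=0$; thus $\Pi_{\yv_1}^i$ is the $q$-SC column peaked at $a$ and $\Pi_{\yv_2}^i$ the one peaked at $b$.

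For $V$: substituting these two columns into $(\Pi\oplus\Pi)_{\yv}^i(\alpha)=\sum_{\beta}\Pi_{\yv_1}^i(\beta)\Pi_{\yv_2}^i(\alpha+\beta)$ and splitting the sum according to whether $\beta=a$ and whether $\alpha+\beta=b$, one finds that $\pi_V$ has a single large entry $s_V:=(1-p)^2+\frac{p^2}{q-1}$ in position $b-a$ and $q-1$ equal entries $t_V:=\frac{2p(1-p)}{q-1}+\frac{(q-2)p^2}{(q-1)^2}$. Hence $\pi_V$ is literally the reliability vector of a $q\hbox{-SC}$ of crossover probability $1-s_V=2p-p^2+\Oq$, which gives the first assertion; and since the multiset of entries of $\pi_V$ is the \emph{same} for every outcome of $(a,b)$, the quantity $\norm{\pi_V}^2=s_V^2+(q-1)t_V^2$ is a constant. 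Expanding $s_V^2=(1-p)^4+\Oq$ and using $t_V=\frac{2p-p^2}{q-1}+O(q^{-2})$, so that $(q-1)t_V^2=\Oq$, gives $\esp(\norm{\pi_V}^2)=(1-p)^4+\Oq$. As a sanity check, the Remark's formula $\esp(\norm{\pi}^2)=(1-p')^2+\Oq$ for $q\hbox{-SC}_{p'}$ applied with $p'=2p-p^2$ returns exactly $(1-p)^4$.

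For $U$: with $v(i)=0$ we have $(\Pi\times\Pi)_{\yv}^i(\alpha)=\Pi_{\yv_1}^i(\alpha)\Pi_{\yv_2}^i(\alpha)/Z$ with $Z=\sum_{\beta}\Pi_{\yv_1}^i(\beta)\Pi_{\yv_2}^i(\beta)$, and the numerator is the pointwise product of the two $q$-SC columns peaked at $a$ and $b$. The shape of this product — hence of $\pi_U$ — depends only on whether $a=b$. If $a=b$, an event of probability $(1-p)^2+\frac{p^2}{q-1}=(1-p)^2+\Oq$, the product vector has one entry $(1-p)^2$ and $q-1$ entries $\frac{p^2}{(q-1)^2}$, and dividing by $Z=(1-p)^2+\Oq$ gives $\norm{\pi_U}^2=1+\Oq$. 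If $a\neq b$, an event of probability $2p-p^2+\Oq$, the product vector has two entries $(1-p)\frac{p}{q-1}$ and $q-2$ entries $\frac{p^2}{(q-1)^2}$, with $Z=\frac{p\,(q(2-p)-2)}{(q-1)^2}$, and a short computation gives $\norm{\pi_U}^2=\frac{2(1-p)^2}{(2-p)^2}+\Oq$. Averaging over the two cases,
$$\esp\!\left(\norm{\pi_U}^2\right)=(1-p)^2\cdot 1+(2p-p^2)\cdot\frac{2(1-p)^2}{(2-p)^2}+\Oq=(1-p)^2+\frac{2p(1-p)^2}{2-p}+\Oq=\frac{(p+2)(1-p)^2}{2-p}+\Oq,$$
which is the second assertion, since $(1-p)^2=(p-1)^2$.

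The computations are elementary, so the real work is organisational. The key observation is that on a $q$-SC the reliability vectors are rigid enough that $\norm{\pi}^2$ is either a constant (for $V$) or a function of the single event $\{a=b\}$ (for $U$), which collapses what looks like an average over the $q^2$ outcomes of $(a,b)$ into a one- or two-term sum. The second point, which I expect to be the only error-prone part, is the $\Oq$ bookkeeping: one must check in each case that the discarded terms are $O(1/q)$ uniformly in $p\in(0,1)$, which is immediate once the normalizing constants are written as rational functions of $q$ and $p$ and the divisions are carried out. I see no genuine obstacle beyond this.
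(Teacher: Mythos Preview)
Your proof is correct and follows essentially the same approach as the paper: compute the columns of $\Pi\oplus\Pi$ and $\Pi\times\Pi$ explicitly for the $q$-SC, observe that only a few column shapes can occur, and average $\norm{\pi}^2$ over them. The only cosmetic difference is that the paper splits into three cases (zero, one, or two errors in position $i$) while you split on whether the two received symbols coincide; your split is in fact the cleaner one, since it absorbs automatically the ``two equal nonzero errors'' sub-case that the paper has to dismiss as negligible, and it makes transparent why no case split is needed at all for $\pi_V$.
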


\begin{proof}
The proof of this Lemma can be found in Appendix \ref{Appendix-A}
\end{proof}

\begin{proposition}
As $q$ tends to infinity, the $\UV$-construction can be decoded correctly by the Koetter-Vardy decoding algorithm with probability $1-o(1)$ if 
$$R<\frac{(p^3-4p^2+4p-4)(1-p)^2}{2(p-2)}$$
\end{proposition}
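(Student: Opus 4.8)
The plan is to apply Theorem~\ref{th:KVP} once for each of the two decoding steps described in Subsection~\ref{Decoding-Algo}, and then to optimize over the rates of $U$ and $V$. Write $R_u \eqdef k_u/n$ and $R_v \eqdef k_v/n$ for the rates of the RS codes $U$ and $V$; since the $\UV$-code has length $2n$ and dimension $k_u+k_v$, its rate is $R=\frac{R_u+R_v}{2}$. Both steps reuse the same received word $\yv$, so it suffices to show that the first step recovers $\mathbf v$ with probability $1-o(1)$ and that, conditionally on a correct $\mathbf v$, the second step recovers $\mathbf u$ with probability $1-o(1)$: a union bound over the two $o(1)$ failure probabilities then gives overall success probability $1-o(1)$.

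For the first step, by Lemma~\ref{Lemma-UV} the error model for the $V$-decoder is the $q\hbox{-SC}_{p'}$ with $p'=2p-p^2$; in particular it is a symmetric channel and $\esp\!\left(\norm{\pi_V}^2\right)=(1-p)^4+\Oq$ tends to $(1-p)^4$ as $q\to\infty$. Hence Theorem~\ref{th:KVP}, applied with $C_{\text{KV}}=(1-p)^4$, shows that the Koetter--Vardy decoder recovers $\mathbf v$ with probability $1-o(1)$ as soon as $R_v<(1-p)^4$. For the second step, once $\mathbf v$ is known, subtracting $(\mathbf 0\mid\mathbf v)$ reduces the problem to decoding $U$ through the $\times$-channel, which is again symmetric (closure of symmetric channels under $\times$, \cite{BB06}) and, again by Lemma~\ref{Lemma-UV}, satisfies $\esp\!\left(\norm{\pi_U}^2\right)=\frac{(p+2)(p-1)^2}{2-p}+\Oq$, tending to $\frac{(p+2)(p-1)^2}{2-p}$. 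Applying Theorem~\ref{th:KVP} once more with $C_{\text{KV}}=\frac{(p+2)(p-1)^2}{2-p}$ gives recovery of $\mathbf u$ with probability $1-o(1)$ provided $R_u<\frac{(p+2)(p-1)^2}{2-p}$.

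Finally, choosing $R_v$ and $R_u$ as large as these two constraints allow, the overall rate $R=\frac{R_u+R_v}{2}$ can be made arbitrarily close to
$$\frac{1}{2}\left((1-p)^4+\frac{(p+2)(p-1)^2}{2-p}\right).$$
A short simplification — factor out $(1-p)^2=(p-1)^2$, put the bracket over the common denominator $2-p$, and expand $(1-p)^2(2-p)+(p+2)=4-4p+4p^2-p^3=-(p^3-4p^2+4p-4)$ — rewrites this as $\frac{(p^3-4p^2+4p-4)(1-p)^2}{2(p-2)}$, the claimed threshold. The only point demanding genuine care is the composition of the two steps: one must use that Theorem~\ref{th:KVP} applies to each Koetter--Vardy call, which requires that the input channel be a symmetric DMC with a well-defined limit of $\esp(\norm{\cdot}^2)$ (guaranteed by the closure results of \cite{BB06} together with Lemma~\ref{Lemma-UV}), and one must observe that conditioning on a correct output of the $V$-decoder does not perturb the error statistics seen by the $U$-decoder, since $\mathbf e_1,\mathbf e_2$ are independent of $\mathbf v$. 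Everything else is bookkeeping of the $\Oq$ and $o(1)$ terms as $q$ (and with it the length $n$) tends to infinity.
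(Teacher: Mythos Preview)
Your argument is correct and follows essentially the same approach as the paper: apply Theorem~\ref{th:KVP} separately to the $V$- and $U$-decoding steps using the expectations supplied by Lemma~\ref{Lemma-UV}, then average the two rate thresholds since $R=\frac{R_u+R_v}{2}$. Your write-up is in fact more careful than the paper's own proof, which is quite terse: you spell out the union bound over the two steps, the closure of symmetric channels under~$\times$, the observation that conditioning on a correct $\mathbf v$ does not affect the law of $(\mathbf e_1,\mathbf e_2)$, and the algebraic simplification to the displayed threshold --- all of which the paper leaves implicit.
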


\begin{proof}
The $\UV$-construction can be decoded correctly by the Koetter-Vardy decoding algorithm if
it decodes correctly $U$ and $V$. By Theorem \ref{th:KVP} decoding succeeds with probability $1-o(1)$ when we choose the rate $R_U$ of $U$ to be any 
positive number below $\esp\left( \norm{\pi_{U}}^2 \right)$ and the rate of $V$ any positive number below
$\esp\left( \norm{\pi_{V}}^2 \right)$. Since the rate $R$ of the $\UV$ construction is equal to $\frac{R_U+R_V}{2}$ 
decoding succeeds if
$$R< \lim_{q \rightarrow \infty} \frac{
\esp \left\{\norm{\pi_{U}}^2 \right\}  
+ \esp \left\{\norm{\pi_{V}}^2 \right\}  
}{2} 
%= \frac{(p-1)^4(2-p) + (p+2)(p-1)^2}{2(2-p)} 
= \frac{(p^3-4p^2+4p-4)(1-p)^2}{2(p-2)}.$$
\end{proof}

From Figure \ref{TwiceUV} we deduce that the $\UV$ decoder outperforms the RS decoder with
%the 
Guruswami-Sudan 
%decoder 
as soon as $R<0.168$.

\subsection{Recursive application of the $\UV$ construction}
Now we will study what happens over the $q\hbox{-SC}_{p}$ if we apply recursively the $\UV$ construction.  So we start with a $\UV$ code, we choose $U$ to be a $\gUV{U_1}{V_1}$ code and $V$ to be a $\gUV{U_2}{V_2}$ code, where $U_1$, $U_2$, $V_1$ and $V_2$ are RS codes over the same alphabet $\mathbb F_q$ and  of the same length. In other words, we look for a code of the form
\begin{eqnarray*}
\AB =   \\
\left\{(\uv_1| \uv_1+\vv_1|\uv_1+\uv_2| \uv_1 + \uv_2 + \vv_1+\vv_2): \uv_i \in U_i,\vv_i \in V_i \right\}\nonumber 
\end{eqnarray*}
From Lemma \ref{Lemma-UV} we obtain 
%directly 
the channel error models for decoding 
$U_1$, $V_1$, $U_2$ and $V_2$ respectively, their reliability matrices are given by
$\Pi_1 \times \Pi_1$, $\Pi_1 \oplus \Pi_1$, $\Pi_2 \times \Pi_2$ and $\Pi_2 \oplus \Pi_2$ respectively (see Fig. \ref{fig:channel models}). We let $p' \eqdef 2p-p^2$.
\begin{figure}[h!]
\begin{center}
\begin{tikzpicture}[scale=.6]
\tikzstyle{level 1}=[sibling distance=60mm] 
\tikzstyle{level 2}=[sibling distance=30mm]
%[level/.style={sibling distance=40mm/#1}]
\node (z){$\Pi$}
  child {node (a) {$\Pi_1=\Pi \times \Pi$}
    child {node (b) {$\Pi_1\times \Pi_1$}}
    child {node (g) {$\Pi_1\oplus \Pi_1$}}
  }
  child {node (j) {$\Pi_2=\Pi \oplus \Pi$}
    child {node (k) {$\Pi_2\times  \Pi_2$}}
    child {node (l) {$\Pi_2 \oplus \Pi_2$}}
   };
\end{tikzpicture}
\caption{The channel error models for $\AB$. \label{fig:channel models}}
\end{center}
\end{figure}

\begin{lemma}
\label{lem:un}
Let $\pi_{U_i}$ and $\pi_{V_i}$ be the probability vectors corresponding to decoding the codes $U_i$'s and $V_i$'s.
\begin{itemize}
\item
The channel error model of the code $V_2$ is a $q\hbox{-SC}_{p''}$ with 
$p''=2p'-{p'}^2$ and
$$\esp\left( \norm{\pi_{V_2}}^2 \right) = (1-p)^8 + \Oq;$$
\item $\esp\left( \norm{\pi_{U_2}}^2 \right) = \frac{(2+p')(1-p')^2}{(2-p')} +  \Oq$;
\item $\esp\left( \norm{\pi_{V_1}}^2 \right)=(1-p)^4\left(\frac{2+3p+8p^2-4p^3}{2-p}\right) +   \Oq$;
\item $\esp\left( \norm{\pi_{U_1}}^2 \right)= \frac{(5p^3-6p^2-5p-4)(1-p)^2}{4-3p} +   \Oq$.
\end{itemize}
\end{lemma}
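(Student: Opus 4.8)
The plan is to compute the four quantities by tracking how the operations $\oplus$ and $\times$ act on the expected squared norm of the probability vector, iterating the single-step formulas from Lemma \ref{Lemma-UV}. Recall that in the two-level construction $\AB$, the code $V_2$ sits at the end of the chain $\Pi \to \Pi_2 = \Pi\oplus\Pi \to \Pi_2\oplus\Pi_2$, the code $U_2$ at $\Pi_2 \to \Pi_2\times\Pi_2$, the code $V_1$ at $\Pi_1 = \Pi\times\Pi \to \Pi_1\oplus\Pi_1$, and the code $U_1$ at $\Pi_1 \to \Pi_1\times\Pi_1$. So each of the four statements is obtained by applying one more $\oplus$ or $\times$ operation to the intermediate reliability matrices already analyzed in Lemma \ref{Lemma-UV}.

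First I would handle $V_2$. The key structural fact, already used in Lemma \ref{Lemma-UV}, is that the $\oplus$ operation applied to a $q$-ary symmetric channel with crossover probability $p$ yields again a $q$-SC, now with crossover probability $2p - p^2$; this is because subtracting two independent $q$-SC$_p$ symbols is itself a $q$-SC. Iterating once more, the channel for $V_2$ is a $q$-SC$_{p''}$ with $p'' = 2p' - (p')^2$ where $p' = 2p - p^2$. Since for a $q$-SC$_r$ we have $\esp(\norm{\pi}^2) = (1-r)^2 + \Oq$, and $1 - p'' = (1-p')^2 = (1-p)^4$, we get $\esp(\norm{\pi_{V_2}}^2) = (1-p)^8 + \Oq$. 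For $U_2$, the channel is $\Pi_2 \times \Pi_2$ where $\Pi_2$ is a $q$-SC$_{p'}$; this is exactly the situation of the $U$-part of a single $\UV$ construction but with $p$ replaced by $p'$, so Lemma \ref{Lemma-UV} gives directly $\esp(\norm{\pi_{U_2}}^2) = \frac{(2+p')(1-p')^2}{2-p'} + \Oq$.

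The harder cases are $V_1$ and $U_1$, because here the base channel $\Pi_1 = \Pi \times \Pi$ is \emph{not} a $q$-SC — the $\times$ operation destroys the symmetric-channel structure (it only preserves the weaker notion of discrete symmetric channel from the definition in the excerpt). So I cannot just substitute a new crossover parameter. Instead I would reprove the $\oplus$-propagation formula and the $\times$-propagation formula at the level of expected squared norms for a general discrete symmetric channel, expressing $\esp(\norm{\pi'}^2)$ after the operation in terms of moments of the input distribution, as is presumably done in the computations behind Lemma \ref{Lemma-UV} (in Appendix A). Concretely, for $\oplus$ one writes $(\Pi\oplus\Pi)^i(\alpha) = \sum_\beta \Pi_{\yv_1}^i(\beta)\Pi_{\yv_2}^i(\alpha+\beta)$ and expands $\norm{\cdot}^2$, using independence of the two copies and the symmetry relation \eqref{eq:symmetry} to evaluate the expectation; for $\times$ one uses the normalized product formula and, crucially, Lemma \ref{lem:simple_formula} ($\esp(\pi(0)) = \esp(\norm{\pi}^2)$) together with the fact that the relevant expectations reduce, up to $\Oq$ terms, to low-order moments of $\pi(0)$ and $\norm{\pi}^2$ evaluated on the input channel $\Pi$. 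Plugging in the already-known values for $\Pi$ (namely $\esp(\pi(0)) = \esp(\norm{\pi}^2) = (1-p)^2 + \Oq$ for the starting $q$-SC$_p$) and for $\Pi_1 = \Pi\times\Pi$ (namely $\esp(\norm{\pi_U}^2) = \frac{(p+2)(p-1)^2}{2-p} + \Oq$ from Lemma \ref{Lemma-UV}), together with one or two additional moments of $\Pi_1$ that have to be computed along the way, one arrives after routine algebra at the stated rational expressions for $\esp(\norm{\pi_{V_1}}^2)$ and $\esp(\norm{\pi_{U_1}}^2)$.

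The main obstacle, then, is bookkeeping rather than conceptual: one must identify exactly which moments of the intermediate channel $\Pi_1$ are needed (the $\oplus$ and $\times$ formulas involve not only $\esp(\norm{\pi}^2)$ but also quantities like $\esp(\pi(0)^2)$, $\esp(\sum_\alpha\pi(\alpha)^3)$, etc.), compute each of those for $\Pi_1 = \Pi\times\Pi$ from the $q$-SC$_p$ data keeping track of $\Oq$ error terms, and then carry out the polynomial simplification. I would organize the proof by first recording a general lemma of the form ``if $\Pi$ is discrete symmetric with moments $m_1, m_2, \dots$ then $\esp(\norm{\pi_{\oplus}}^2) = f(m_1,m_2,\dots) + \Oq$ and $\esp(\norm{\pi_{\times}}^2) = g(m_1,m_2,\dots)+\Oq$'', and then instantiate it twice along each branch of the tree in Figure \ref{fig:channel models}; this is, I expect, precisely the content of Appendix A applied one level deeper.
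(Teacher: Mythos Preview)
Your treatment of $V_2$ and $U_2$ is correct and matches the paper exactly: both follow by applying Lemma~\ref{Lemma-UV} with $p$ replaced by $p'$, since $\Pi_2$ is again a $q$-SC.

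For $V_1$ and $U_1$ your outline is viable, but it takes a different and more roundabout route than the paper. The paper does \emph{not} develop general moment-propagation formulas for $\oplus$ and $\times$ on an arbitrary discrete symmetric channel. Instead it exploits the very concrete fact that, up to $\Oq$ and permutation, the columns of $\Pi_1=\Pi\times\Pi$ take only \emph{two} possible forms: the vector $(1,0,\dots,0)$ (from Case~1, with probability $(1-p)^2$) and the vector $\bigl(\tfrac{1-p}{2-p},\tfrac{1-p}{2-p},0,\dots,0\bigr)$ (from Cases~2--3, with probability $2p-p^2$). The paper then simply enumerates the pairs of such columns (Cases~4--9 according to how many of the four original positions carry an error), computes each resulting $\oplus$- or $\times$-column explicitly, reads off $\norm{\cdot}^2$, and weights by the case probabilities. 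No higher moments like $\esp(\pi(0)^2)$ or $\esp\bigl(\sum_\alpha\pi(\alpha)^3\bigr)$ ever need to be identified separately.

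What each approach buys: the paper's enumeration is shorter and entirely mechanical for this particular channel, because the two-point structure of $\Pi_1$ keeps the case analysis tiny. Your proposed general lemma would be reusable at deeper recursion levels or for other base channels, but here it is heavier than needed --- and in fact, once you write down the distribution of $\Pi_1$'s columns to compute your ``additional moments,'' you have already done the case enumeration, so the abstraction gains nothing in this instance.
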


\begin{proof}
The proof of this Lemma can be found in  Appendix \ref{Appendix-B}
\end{proof}

\begin{proposition}
As $q$ tends to infinity, the $\AB$-construction can be decoded correctly by the Koetter-Vardy decoding algorithm with probability $1-o(1)$ if
\begingroup\makeatletter\def\f@size{10}\check@mathfonts
\def\maketag@@@#1{\hbox{\m@th\large\normalfont#1}}
$$
R< 
\frac{(3p^{10} - 34p^9 + 187p^8 - 628p^7 + 1376p^6 - 2016p^5 +
1970p^4 - 1272p^3 + 568p^2 - 208p + 64)(p - 1)^2}
{4(p^2 - 2p+2)(3p - 4)(p - 2)}
$$
\endgroup
\end{proposition}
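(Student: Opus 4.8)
The plan is to run exactly the same argument as in the proof of the preceding proposition for the $\UV$-construction, now with the four Reed-Solomon components $U_1,V_1,U_2,V_2$ in place of the two components $U,V$. The code $\AB$ is decoded correctly by the Koetter-Vardy algorithm as soon as each of $U_1,V_1,U_2,V_2$ is decoded correctly from the channel whose reliability matrix is $\Pi_1\times\Pi_1$, $\Pi_1\oplus\Pi_1$, $\Pi_2\times\Pi_2$ and $\Pi_2\oplus\Pi_2$ respectively (see Figure~\ref{fig:channel models}). Since these four codes share a common length $n$, the code $\AB$ has length $4n$ and dimension $k_{U_1}+k_{V_1}+k_{U_2}+k_{V_2}$, hence rate $R=\frac14\left(R_{U_1}+R_{V_1}+R_{U_2}+R_{V_2}\right)$.

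Each of the four induced channels is again a discrete symmetric channel with $q$-ary inputs: the $q$-SC$_p$ we start from is symmetric, and the operations $\oplus$ and $\times$ preserve symmetry, as recalled after the definition of symmetric channels. Moreover each of $\esp\left(\norm{\pi_{U_1}}^2\right)$, $\esp\left(\norm{\pi_{V_1}}^2\right)$, $\esp\left(\norm{\pi_{U_2}}^2\right)$, $\esp\left(\norm{\pi_{V_2}}^2\right)$ has a limit as $q\to\infty$ by Lemma~\ref{lem:un}. Hence Theorem~\ref{th:KVP} applies to each component: for every $\epsilon>0$ and all $q$ large enough, the component with probability vector $\pi$ is decoded correctly with probability $1-o(1)$ provided its rate is at most $\lim_{q\to\infty}\esp\left(\norm{\pi}^2\right)-\epsilon$. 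Consequently, whenever $R$ is strictly below $\frac14$ times the sum of the four limits, one may pick Reed-Solomon component rates $R_{U_1},R_{V_1},R_{U_2},R_{V_2}$ with $R=\frac14\left(R_{U_1}+R_{V_1}+R_{U_2}+R_{V_2}\right)$ and each strictly below the corresponding limit, and a union bound over the four vanishing failure probabilities shows that $\AB$ is decoded correctly with probability $1-o(1)$.

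It then remains to evaluate $\frac14$ times the sum of the four limits of Lemma~\ref{lem:un}. Writing $p'=2p-p^2$ one has $1-p'=(1-p)^2$ and $2-p'=p^2-2p+2$, so the four limits read $(1-p)^8$, $(1-p)^4\,\frac{2+3p+8p^2-4p^3}{2-p}$, $\frac{(2+p')(1-p)^4}{p^2-2p+2}$ and $\frac{(5p^3-6p^2-5p-4)(1-p)^2}{4-3p}$. Reducing their sum over the common denominator $(p^2-2p+2)(2-p)(4-3p)$, factoring the common $(1-p)^2$ out of the numerator, dividing by $4$, and using $(2-p)(4-3p)=(p-2)(3p-4)$ together with $(1-p)^2=(p-1)^2$ turns this into the bound stated in the proposition, the degree-$10$ polynomial in the numerator being precisely the collected numerator after extraction of $(p-1)^2$; the $\Oq$ terms of Lemma~\ref{lem:un} vanish in the limit $q\to\infty$.

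The only genuinely laborious part is this final algebraic reduction: expanding a sum of four rational functions of $p$ over a common denominator, verifying the common factor $(p-1)^2$, and checking coefficient by coefficient that what remains is the stated degree-$10$ polynomial, all while tracking the sign changes when rewriting $(2-p)(4-3p)$ as $(p-2)(3p-4)$. No conceptual difficulty arises beyond this bookkeeping; the leading term $3p^{10}$ of that polynomial is a convenient consistency check, since the sum of the four limits is dominated by $(1-p)^8$ as $p\to\infty$, which forces the large-$p$ behaviour $\frac14 p^8$ on both sides of the inequality.
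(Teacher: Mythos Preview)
Your proof is correct and follows essentially the same approach as the paper: reduce to correct decoding of the four Reed-Solomon components, apply Theorem~\ref{th:KVP} to each, use $R=\frac14(R_{U_1}+R_{V_1}+R_{U_2}+R_{V_2})$, and plug in the limits from Lemma~\ref{lem:un}. You add a bit more detail than the paper does (the symmetry-preservation remark, the explicit union bound, and the algebraic bookkeeping with the consistency check on the leading term), but the argument is the same.
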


\begin{proof}
Decoding of $\AB$ succeeds if the Koetter-Vardy decoder is able to decode correctly $U_1$, $U_2$, $V_1$ and $V_2$.
This happens with probability $1-o(1)$ as soon as the rates $R_{U_1},R_{U_2},R_{V_1}$ and $R_{V_2}$ of these codes satisfy for some
$\epsilon >0$
\begin{eqnarray*}
R_{U_1} & \leq & \esp\left( \norm{\pi_{U_1}}^2\right) -\epsilon \\
R_{U_2} & \leq & \esp\left( \norm{\pi_{U_2}}^2\right) -\epsilon \\
R_{V_1} & \leq & \esp\left( \norm{\pi_{V_1}}^2\right) -\epsilon \\
R_{V_1} & \leq & \esp\left( \norm{\pi_{V_2}}^2\right) -\epsilon
\end{eqnarray*}
Since the rate $R$ of $\AB$ is given by $$
R = \frac{R_{U_1} + R_{U_2}+ R_{V_1}+R_{V_2}}{4}
$$ we
finally obtain that decoding succeeds with probability $1-o(1)$ as soon as the rate
$R$ is chosen such that
\begin{align*}
R &<  \lim_{q \rightarrow \infty} \frac{\sum_{i=1}^2
\esp \left\{\norm{\pi_{U_i}}^2 \right\}  
+ \esp \left\{\norm{\pi_{V_i}}^2 \right\}  
}{4}\\
\end{align*}
This implies the proposition by plugging the value of these expecations by using Lemma \ref{lem:un}.
\end{proof}

\begin{figure}[h!]
\includegraphics[width=1\linewidth]{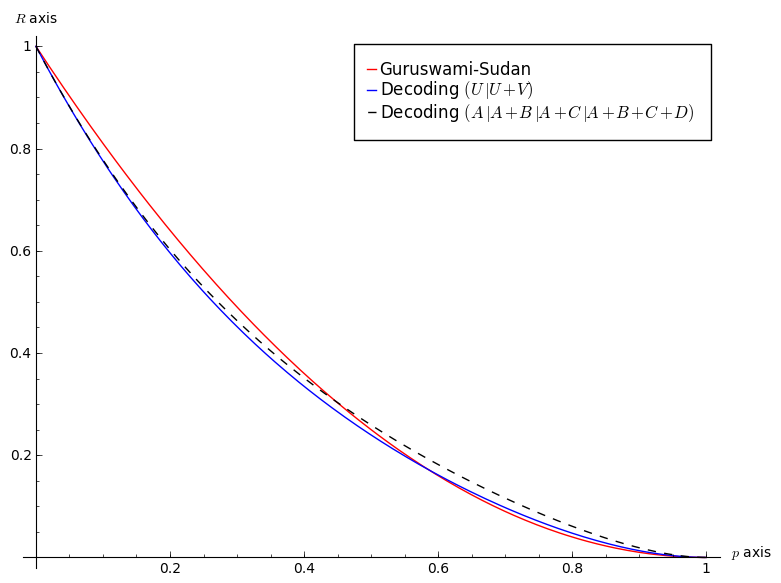}
\caption{Rate  plotted against the crossover error probability $p$ for several algorithms. The red line refers to the Guruswami-Sudan algorithm, the blue line to the $\UV$-construction and the dashed line to 
%the best choice if we apply twice 
%the $\UV$-construction
the $\AB$-construction.}
\label{TwiceUV}
\end{figure}

From Figure \ref{TwiceUV} we deduce that if we apply twice the $\UV$-construction we get better performance than decoding a classical RS code with the Guruswami-Sudan decoder for low rate codes, specifically for $R<0.3$.

\section{A new Mc-Eliece scheme}
As we have seen, this $\UV$ construction gives codes which in the low rate regime have even better 
error correction capacities than a standard RS code. This suggests to use such codes in a McEliece cryptosystem 
 to replace the original Goppa codes. These $\UV$ codes do not only have a better error correction capacity, they also allow to avoid  the Sidelnikov-Shestakov attack \cite{SS92} that broke
a previous proposal based on GRS codes \cite{N86}. 
Furthermore we can even strengthen the security of this scheme by using instead of the $\UV$ construction a generalized $\UV$
code which has trivially the same error-correction capacity as the $\UV$ construction but with better minimum distance properties 
which seems essential to avoid attacks based on finding minimum weight codewords in the code and trying to 
unravel the code structure from those minimum weight codewords. Analyzing precisely attacks of this kind needs however 
additional tools due to the peculiar structure of these generalized $\UV$ codes (it is for instance inappropriate to use the analysis done for random codes) and is out of scope of this paper. 
%This approach has been pioneered by Wang in \cite{W15} where he suggests to choose $U$ as a random code and $V$ as a generalized Reed-Solomon (GRS) code. We sugge%st here to choose $U$ and $V$ to be both GRS codes and to take advantage of the decoding algorithm given here to improve significantly the  plaintext size/ciphert%ext ratio.

%\subsection{The generalized $\UV$-construction}

\begin{definition}
\label{Gen-UV}
%Similar to Definition \ref{UV}.
Let $(U,V)$ be a pair of codes with parameters $[n,k_u,d_u]_q$ and $[n,k_v,d_v]_q$, respectively. Consider the
%the vectors $\mathbf a_1, \mathbf a_2, \mathbf a_3, \mathbf a_4\in \mathbb F_q^n$ such that $\mathbf a_1(i)\mathbf a_4(i) - \mathbf a_2(i)\mathbf a_3(i) \neq 0$ f% or every $i$ in $\{1,\dots,n\}$ where we denote by $a_j(i)$ the $i$-th coordinate of $\av_j$. In other words, this means that the 
following matrix 
$$\Dm=\left(\begin{array}{c|c}
\Dm_1 & \Dm_3\\
\hline
\Dm_2 & \Dm_4
\end{array}\right)
\in \mathbb F_q^{n\times n}$$
where the $\Dm_i$'s are diagonal matrices
such that $\Dm$ is non singular.
We define the generalized $\UV$-construction of $U$ and $V$ with respect to $\Dm$ as the matrix product code:
$$\left\{ \guv \mid \mathbf u\in U \hbox{ and } \mathbf v\in V \right\}.$$
It is denoted by  $[U,V]\cdot \Dm$.
\end{definition}

\begin{remark}
\label{Gen-UV}
Let $U$ and $V$ be codes with generator matrices $\Gm_u$ and $\Gm_v$, and parity check matrices $\Hm_u$ and $\Hm_v$, respectively. 
It is a simple exercise to show that 
$$
\Gm = \left(\begin{array}{c|c} 
\Gm_u \Dm_1 & \Gm_u  \Dm_3\\
\hline
\Gm_v \Dm_2 & \Gm_v \Dm_4
\end{array}\right) 
~~\hbox{ and }~~ 
\Hm = \left( 
\begin{array}{c|c} 
\Hm_u \Dm_1& -\Hm_u \Dm_3\\
\hline
\Hm_v \Dm_2 & -\Hm_v \Dm_4
\end{array}\right)
$$
is a generator matrix and a parity check matrix, respectively 
for $[U,V]\cdot \Dm$.
\end{remark}

We consider the matrix-product construction $[U,V]\cdot \Dm$ which was already introduced in \cite{BZ74} and rediscovered in \cite{BN01,OS02}. In \cite[Theorem3.7]{BN01} a lower bound for the minimum distance of such code is given when the matrix $\Dm$ has a certain property, namely non-singular by columns. 
In \cite{OS02} a similar result is proved but makes the hypothesis that $U$ contains $V$ (but $\Dm$ is arbitrary). 
Their result does not seem to cover exactly our case, therefore we give a proof below.

\begin{lemma}
\label{Lemma-minimumDistance}
The code $[U,V]\cdot \Dm$ has parameters
$[2n,k_u+k_v,d]$ with $$\min\{2d_u,d_v\}\leq d \leq \min\{2d_u,2dv\}.$$
\end{lemma}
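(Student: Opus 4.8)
The dimension count is immediate: the generator matrix $\Gm$ exhibited in Remark~\ref{Gen-UV} has $k_u+k_v$ rows, and these rows are linearly independent because $\Dm$ is non-singular — a linear dependence $(\av\Gm_u\mid\bv\Gm_v)\Dm^{\!*}$ between them (where $\av,\bv$ are coefficient vectors and the $\Dm^{\!*}$ refers to the block action) forces $(\uv\Dm_1+\vv\Dm_2\mid\uv\Dm_3+\vv\Dm_4)=\mathbf 0$ with $\uv=\av\Gm_u$, $\vv=\bv\Gm_v$, hence $(\uv\mid\vv)\Dm=\mathbf 0$, hence $\uv=\vv=\mathbf 0$, hence $\av=\bv=\mathbf 0$ by independence of the rows of $\Gm_u$ and $\Gm_v$. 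So the length is $2n$ and the dimension is $k_u+k_v$; it remains to sandwich $d$.

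\textbf{Upper bound.} I take a minimum-weight codeword $\vv$ of $V$ with $\mathrm{w}_H(\vv)=d_v$ and the associated codeword $(\mathbf 0\cdot\Dm_1+\vv\Dm_2\mid\mathbf 0\cdot\Dm_3+\vv\Dm_4)=(\vv\Dm_2\mid\vv\Dm_4)$ of $[U,V]\cdot\Dm$. Its two halves each have weight at most $d_v$, so its total weight is at most $2d_v$. Similarly, taking a minimum-weight $\uv\in U$ gives the codeword $(\uv\Dm_1\mid\uv\Dm_3)$ of weight at most $2d_u$. The only subtlety is that these need not be \emph{nonzero} codewords: if, say, $\Dm_2=\Dm_4=\mathbf 0$ then $(\vv\Dm_2\mid\vv\Dm_4)=\mathbf 0$ contributes nothing. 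But $\Dm$ non-singular rules out a zero row-block, and one checks that at least one of the two constructed codewords is nonzero (e.g.\ if $(\uv\Dm_1\mid\uv\Dm_3)=\mathbf 0$ for all $\uv\in U$ then the top block-row of $\Dm$ is annihilated by $U$; combined with a parallel statement for $V$ this would contradict non-singularity of $\Dm$). Hence $d\le\min\{2d_u,2d_v\}$.

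\textbf{Lower bound.} Let $(\uv\Dm_1+\vv\Dm_2\mid\uv\Dm_3+\vv\Dm_4)$ be a nonzero codeword, with $\uv\in U$, $\vv\in V$. I distinguish the case $\vv=\mathbf 0$ and $\vv\neq\mathbf 0$. If $\vv=\mathbf 0$ the codeword is $(\uv\Dm_1\mid\uv\Dm_3)$ with $\uv\neq\mathbf 0$; since $\Dm_1,\Dm_3$ are diagonal they only delete coordinates, but $\Dm$ non-singular forces, for each index $i$, that $\Dm_1$ and $\Dm_3$ cannot both vanish in position $i$ (else $\Dm$ has a zero column), so each nonzero coordinate of $\uv$ survives in at least one half, giving weight $\ge\mathrm{w}_H(\uv)\ge d_u$; a sharper bookkeeping over the two halves gives $\ge 2d_u$ once one accounts for coordinates where $\Dm_1(i)\neq 0$ \emph{and} $\Dm_3(i)\neq 0$ separately from those where exactly one is nonzero — here I would invoke that both $\Dm_1$ and $\Dm_3$ are themselves non-singular when $\vv=\mathbf 0$ is the only constraint, which is where I expect to do a little work. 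If $\vv\neq\mathbf 0$, write $\wv_1\eqdef\uv\Dm_1+\vv\Dm_2$ and $\wv_2\eqdef\uv\Dm_3+\vv\Dm_4$; then $\wv_2-\wv_1\cdot(\text{diagonal adjustment})$ should be expressible as $\vv$ times an invertible diagonal matrix plus a word of $U$ (using that $\Dm$ is non-singular to solve for $\vv$), so on the support complement of $\vv$ the two halves agree after rescaling, and on the support of $\vv$ one of the halves is nonzero; chasing the diagonal entries yields total weight $\ge d_v$. Combining, $d\ge\min\{2d_u,d_v\}$.

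\textbf{Main obstacle.} The genuinely delicate point is the lower bound in the $\vv=\mathbf 0$ branch: getting the factor $2$ (rather than just $d_u$) out of $(\uv\Dm_1\mid\uv\Dm_3)$ requires that $\Dm_1$ and $\Dm_3$ have disjoint-free supports in the right sense, which is a consequence of $\Dm$ being non-singular but needs to be extracted carefully from the block structure (it is \emph{not} true for an arbitrary $\Dm$ with diagonal blocks that $\Dm_1$ alone is invertible). I would isolate this as a small preliminary claim — ``for a non-singular $\Dm$ with diagonal blocks, for every index $i$ the $2\times 2$ matrix $\left(\begin{smallmatrix}\Dm_1(i)&\Dm_3(i)\\\Dm_2(i)&\Dm_4(i)\end{smallmatrix}\right)$ is non-singular'' — and deduce everything from it; this claim makes both the disjoint-support argument and the ``at least one constructed codeword is nonzero'' argument in the upper bound clean.
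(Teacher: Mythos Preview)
Your overall architecture---split on $\vv=\mathbf 0$ versus $\vv\neq\mathbf 0$, bound each case separately, exhibit codewords of small weight for the upper bound---is exactly the paper's. The upper-bound part is fine (indeed more careful than the paper, which just writes $\mathrm w_H(\cv)=2d_u$ without checking nonzeroness).

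The genuine gap is precisely the one you flag, and your proposed fix does not close it. Your preliminary claim ``each $2\times 2$ block $\left(\begin{smallmatrix}\Dm_1(i)&\Dm_3(i)\\\Dm_2(i)&\Dm_4(i)\end{smallmatrix}\right)$ is non-singular'' is correct (after permuting rows and columns $\Dm$ becomes block-diagonal in these $2\times 2$ blocks, so $\det\Dm$ is $\pm$ their product), but it only tells you that the first \emph{row} $(\Dm_1(i),\Dm_3(i))$ is nonzero, not that both entries are nonzero. Hence in the $\vv=\mathbf 0$ case you only get $\mathrm w_H(\uv\Dm_1\mid\uv\Dm_3)\geq \mathrm w_H(\uv)\geq d_u$, not $2d_u$. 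Concretely, with $n=2$, $\Dm_1=\mathrm{diag}(1,0)$, $\Dm_3=\mathrm{diag}(0,1)$, $\Dm_2=\mathrm{diag}(0,1)$, $\Dm_4=\mathrm{diag}(1,0)$ the matrix $\Dm$ is non-singular, yet $\uv=(1,0)$ gives $(\uv\Dm_1\mid\uv\Dm_3)=(1,0,0,0)$ of weight $1<2d_u$. So from $\Dm$ non-singular alone you \emph{cannot} extract the factor $2$.

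What the paper actually does is simply assert $\mathrm w_H(\cv)=2\,\mathrm w_H(\uv)$ in the $\vv=\mathbf 0$ case and, in the $\vv\neq\mathbf 0$ case, write
\[
\uv\Dm_3+\vv\Dm_4=(\uv\Dm_1+\vv\Dm_2)\,\Dm_1^{-1}\Dm_3+\vv\bigl(\Dm_4-\Dm_2\Dm_1^{-1}\Dm_3\bigr)
\]
and apply the triangle inequality; since $\Dm_1^{-1}\Dm_3$ is diagonal (weight non-increasing) and the Schur complement $\Dm_4-\Dm_2\Dm_1^{-1}\Dm_3$ has all diagonal entries nonzero (its $i$-th entry is $\Dm_1(i)^{-1}$ times the $2\times2$ determinant), this collapses to $\mathrm w_H(\cv)\geq\mathrm w_H(\vv)\geq d_v$. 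Both steps tacitly use that $\Dm_1$ is invertible, i.e.\ all its diagonal entries are nonzero. So the paper is implicitly working under the stronger hypothesis that $\Dm_1$ (and $\Dm_3$) are invertible; with that assumption your $\vv=\mathbf 0$ case becomes the one-liner $\mathrm w_H(\uv\Dm_1\mid\uv\Dm_3)=2\,\mathrm w_H(\uv)$, and your $\vv\neq\mathbf 0$ sketch is made precise by the Schur-complement identity above. Without that assumption the stated lower bound $\min\{2d_u,d_v\}$ need not hold, as the example shows.
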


\begin{proof}
It is clear that $\mathcal C$ has length $2n$ and dimension $k_u+k_v$.

Now, consider a nonzero codeword $\mathbf c = \guv \in \mathcal C$. 
We distinguish two cases:
\begin{itemize}
\item If $\mathbf v=0$. Then,
$\mathrm w_H(\mathbf c) = 2 \mathrm w_H(\mathbf u) \geq 2d_u$.
 \item Otherwise, if $\mathbf v\neq 0$. By the Triangle Inequality $\mathrm w_H(a+b)\geq \mathrm w_H(a)-\mathrm w_H(b)$ and the fact that $D$ is non singular we have that
\begin{eqnarray*}
\mathrm w_H(\mathbf c) & = & 
\mathrm w_H(\mathbf u \Dm_1 + \mathbf v \Dm_2) +
\mathrm w_H(\mathbf u \Dm_3 + \mathbf v \Dm_4) \\
& \geq & \mathrm w_H(\mathbf u \Dm_1 + \mathbf v \Dm_2) +
\mathrm w_H(\mathbf v\left( \Dm_4 - \Dm_2 \Dm_1^{-1} \Dm_3 \right)) - 
\mathrm w_H((\mathbf u \Dm_1 + \mathbf v\Dm_2)\Dm_1^{-1}\Dm_3)\\
& = & \mathrm w_H(\mathbf v) \geq d_v
\end{eqnarray*}
\end{itemize}
Thus, $d\geq \min \{2d_u, d_v\}$. Moreover, take $\mathbf u\in U$ and $\mathbf v = 0$ with $\mathrm w_H(\mathbf u)=d_u$.
In such a case $w_H(\guv) = 2 d_u$ and therefore $d \leq 2 d_u$.
The other upper bound follows by choosing $\mathbf V\in V$ and $\mathbf u = 0$ with $\mathrm w_H(\mathbf v)=d_v$.
\end{proof}

%It is readily verified that this code has parameters
%$[2n,k_u+k_v,d]$ with $$\min\{2d_u,d_v\}\leq d \leq \min\{2d_u,2dv\}.$$
Note that the minimum distance of this generalized $\UV$ construction can supersede the minimum distance of the standard $\UV$ construction which is 
equal to $\min\{2d_u,d_v\}$.

\begin{lemma}
\label{Dual-Lemma-minimumDistance}
The dual code of $[U,V]\cdot \Dm$ is the matrix product code $\mathcal [U^{\perp}, V^{\perp}]\cdot \Dm'$ with 
$$D'=\left(\begin{array}{c|c}
\Dm_1 & -\Dm_3\\
\hline
\Dm_2 & -\Dm_4
\end{array}\right)$$
This code has parameters
$[2n,n-(k_u+k_v),d^{\perp}]$ with $\min\{2d_u^{\perp},d_v^{\perp}\}\leq d^{\perp} \leq \min\{2d_u^{\perp},2dv^{\perp}\}$. 
\end{lemma}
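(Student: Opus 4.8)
The plan is to verify the claimed parity-check matrix for $[U,V]\cdot\Dm$ and then use it to read off the dual. First I would establish the generator/parity-check claim of the preceding remark: writing a codeword as $\guv$ with $\uv=\xv\Gm_u$, $\vv=\yv\Gm_v$, one sees directly that $(\xv\mid\yv)\Gm$ equals this codeword, so $\Gm$ generates $[U,V]\cdot\Dm$, and it has full rank $k_u+k_v$ because $\Dm$ is nonsingular and $\Gm_u,\Gm_v$ have full rank. Then I would check $\Gm\Hm^{\mathsf T}=\mat{0}$ by block multiplication: the $(i,j)$ block products give terms like $\Gm_u\Dm_1\Dm_1^{\mathsf T}\Hm_u^{\mathsf T}$, and since the $\Dm_i$ are diagonal they commute appropriately and $\Gm_u\Hm_u^{\mathsf T}=\mat 0$, $\Gm_v\Hm_v^{\mathsf T}=\mat 0$ kill everything; a count of rows shows $\Hm$ has $2n-(k_u+k_v)$ rows, and its rank is full again by nonsingularity of $\Dm$, so $\Hm$ is a parity-check matrix.

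Next I would identify the dual as a generalized $\UV$ code. The point is that $\Hm$ from the remark has exactly the shape of a generator matrix $\Gm'$ of $[U^\perp,V^\perp]\cdot\Dm'$: take $U^\perp$ with generator matrix $\Hm_u$, $V^\perp$ with generator matrix $\Hm_v$, and the diagonal blocks
$$\Dm'_1=\Dm_1,\quad \Dm'_3=-\Dm_3,\quad \Dm'_2=\Dm_2,\quad \Dm'_4=-\Dm_4.$$
Then the generator matrix of $[U^\perp,V^\perp]\cdot\Dm'$ given by the remark is precisely $\Hm$. Since $\Dm'$ is obtained from $\Dm$ by negating its right block-column, $\det\Dm'=\pm\det\Dm\neq 0$, so $\Dm'$ is nonsingular and Definition \ref{Gen-UV} applies; hence $[U^\perp,V^\perp]\cdot\Dm'$ is a well-defined generalized $\UV$ code whose generator matrix is a parity-check matrix of $[U,V]\cdot\Dm$. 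Therefore it equals $\left([U,V]\cdot\Dm\right)^\perp$.

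Finally, the parameter claim follows immediately from Lemma \ref{Lemma-minimumDistance} applied to $[U^\perp,V^\perp]\cdot\Dm'$: its length is $2n$, its dimension is $(n-k_u)+(n-k_v)=2n-(k_u+k_v)$, and with $U^\perp$ an $[n,n-k_u,d_u^\perp]_q$ code and $V^\perp$ an $[n,n-k_v,d_v^\perp]_q$ code the minimum distance bound reads $\min\{2d_u^\perp,d_v^\perp\}\le d^\perp\le\min\{2d_u^\perp,2d_v^\perp\}$.

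I do not expect any serious obstacle here; the only thing requiring care is the bookkeeping in the block product $\Gm\Hm^{\mathsf T}=\mat 0$ — in particular making sure the diagonal matrices are transposed and reordered correctly (they are symmetric, so $\Dm_i^{\mathsf T}=\Dm_i$, which makes the cross terms cancel cleanly) — and checking that $\Hm$ has the right number of linearly independent rows. Once the remark's parity-check claim is justified, recognizing it as the generator matrix of $[U^\perp,V^\perp]\cdot\Dm'$ is essentially a pattern match, and the parameters come for free from the already-proven Lemma \ref{Lemma-minimumDistance}.
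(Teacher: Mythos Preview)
Your proposal is correct and follows exactly the paper's approach: the paper's proof simply invokes Remark~\ref{Gen-UV} (the parity-check matrix of $[U,V]\cdot\Dm$ is the displayed $\Hm$, which is visibly a generator matrix of $[U^\perp,V^\perp]\cdot\Dm'$) and then applies Lemma~\ref{Lemma-minimumDistance} to the pair $(U^\perp,V^\perp)$. You have merely spelled out the verification of $\Gm\Hm^{\mathsf T}=\mat 0$ and the rank count that the remark leaves as an exercise; incidentally, your dimension $2n-(k_u+k_v)$ is the correct one (the $n-(k_u+k_v)$ in the statement is a typo).
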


\begin{proof}
Remark \ref{Gen-UV} showed that the dual code of $[U,V]\cdot \Dm$ is a generalized $\UV$-construction of $U^{\perp}$ and $V^{\perp}$. Then, the result follows from Lemma \ref{Lemma-minimumDistance}.
\end{proof}

These generalized $\UV$ codes based on RS constituent codes have clearly an efficient decoding which is similar to the  $\UV$-decoder. There are only a few differences: when we receive a word $(\yv_1,\yv_2)$ we just compute
the difference $ \yv_1\Dm_3 - \yv_2 \Dm_1$ which should be a noisy version of $\vv(\Dm_2 \Dm_3  -  \Dm_4\Dm_1)$. 
However the error correction capacity is the same as the original $\UV$ with this kind of decoding algorithm.
More precisely, the McEliece scheme we propose is the following
%We therefore suggest to choose $\Fc$ in this way in the McEliece scheme:
%Let $\mathcal F$ be a family of linear codes of length $n$ over $\mathbb F_q$ with an efficient decoding algorithm. Every element of this family is represented by the triple $(\mathcal C, \mathcal A_{\mathcal C}, t)$ where $\mathcal A_{\mathcal C}$ denotes a decoding algorithm for $\mathcal C\in \mathcal F$ which corrects up to $t$ errors. Our proposal can be summarized as follows:

\begin{itemize}
\item[] \textbf{Key generation:} 
\begin{itemize}
\item Choose $U, V$ as RS codes of some length $n$.
\item  Construct a random matrix $\Dm$ as described in Definition \ref{Gen-UV}.
\item Let $G$ be a random generator matrix of the code $\mathcal C= [U,V]\cdot \Dm \cdot {\mathbf \Sigma}_{2n}$ where
${\mathbf \Sigma}_{2n}$ is a permutation matrix of size $2n$ and $\mathcal A_{\mathcal C}$ a decoding algorithm for $\mathcal C$  that  typically corrects  $t$ errors. It consists in applying ${\mathbf \Sigma}_{2n}^{-1}$ to the received word and then performing
the aforementioned generalized $\UV$-decoder.
\end{itemize}
\item[]The \emph{public key} and the \emph{private key} are given respectively by:
$$\begin{array}{ccc}
\mathcal K_{\mathrm{pub}}=(G,t) & \hbox{ and }&
\mathcal K_{\mathrm{secret}}=\mathcal A_{\mathcal C}
\end{array}$$ 
\item[] \textbf{Encryption:} $\mathbf y = \mathbf m G + \mathbf e$ where $\mathbf m$ is the message and $\mathbf e$ is a random error vector of weight at most $t$.
\item[] \textbf{Decryption:} Use $\mathcal K_{\mathrm{secret}}$ to retrieve $\mathbf m$.
\end{itemize}

Thus, by choosing code $U$ and $V$ of large enough minimum distance we avoid attacks that try to recover the code structure by looking for low weight codewords either in the code or in its dual. If the minimum distance of the generalized $\UV$ code is equal to $2 d_u$ such codewords arise as
codewords of the form $(\uv \Dm_1|\uv \Dm_3) \Sigma_{2n}$. This clearly leaks information about $\Sigma_{2n}$ and $\Dm_1$ and 
$\Dm_3$ if we are able
to find such codewords. This is why we want to avoid that such codewords can be easily found.

Note that Wang proposed in \cite{W15} a very similar scheme, with the difference that $U$ was a random code and $V$ a RS code and 
that he took only a subcode of the generalized $\UV$ code namely the code generated by 
$\left(\begin{array}{c|c} 
\Gm_u \Dm_1 + \Gm_v \Dm_2 & \Gm_u  \Dm_3 + \Gm_v \Dm_4\end{array}\right)$. The code rate loss implied by this choice  results in 
a significant loss in the key size (since we have to protect ourself against generic decoders for $t$ errors  for a code which is of much smaller dimension). The fact that $U$ is random in his scheme however is a rather strong argument in favor of its security.

\bibliographystyle{myabbrv}
%\bibliography{codecrypto}
\bibliography{u+v-arxiv}

\appendix
\section{The $\UV$ construction}
\label{Appendix-A}
Along the following two sections and by abuse of notation we will use the symbol $\oq$ not only to describes the big O notation but also as a vector whose $L$-infinity norm is behaved like the function $\frac{1}{q}$ when $q$ tends to infinity, in other words, a vector whose elements tends to zero as $q$ goes to infinity . Recall that the infinity norm of a vector $\mathbf v\in \mathbb F_q^n$, denoted $\norm{\mathbf v}_{\infty}$, is defined as the maximum of the absolute values of its components, i.e.
$$\norm{\mathbf v}_{\infty} = \max \left\{|v(i) \mid i = 1, \ldots ,n | \right\}.$$

In this appendix, $U$ and $V$ are RS codes and we will use them in a $\UV$ construction.
We start with a $q$-ary symmetric channel with error probability $p$ ($q\hbox{-SC}_{p}$). 
On the following we obtain the channel error models for decoding $U$ and V.
Recall that the reliability matrix for the $U$-decoder is $\Pi_1=\Pi\times \Pi$
whereas for the $V$-decoder it is  $\Pi_2=\Pi \oplus \Pi$.

Suppose we transmit the codeword $\uv$ over a noisy channel and we receive the vector
$$\mathbf y = 
\left(\mathbf y_1 \mid \mathbf y_2\right) = 
\uv + \left( \mathbf e_1 \mid \mathbf e_2\right)$$
We begin to observe that in the case of a $q$-ary symmetric channel we have only the following possibilities
\begin{enumerate}[ncases]
\item \label{it:1}No error has occurred in position $i$: $e_1(i) = e_2(i) = 0$. 
\item \label{it:2}An error has occured in position $i$: $e_1(i)\neq 0$ or $e_2(i)\neq 0$. 
\item \label{it:3}Two errors have occurred in position $i$: $e_1(i) \neq 0$ and $e_2(i)\neq 0$. 
\end{enumerate}

\begin{table}[h!]
\centering
\begin{tabular}{|c|c|}
\cline{2-2}
\multicolumn{1}{c|}{} & \begin{tabular}{c}Probability of \\ occurrence\end{tabular}\\
\hline
\begin{tabular}{c}
\textbf{Case} $1$\\
(No errors)
\end{tabular} & $(1-p)^2$\\
\hline
\begin{tabular}{c}
\textbf{Case} $2$\\
($1$ error)
\end{tabular} & $2p(1-p)$\\
\hline
\begin{tabular}{c}
\textbf{Case} $3$\\
($2$ errors)
\end{tabular} & $p^2$\\
\hline
\end{tabular}
\caption{Ways of combining the vectors $y_1(i)$ and $y_2(i)$}
\end{table}

\subsection{The matrix $\Pi_1=\Pi \times \Pi$ in the $q$-ary symmetric channel}

\begin{lemma}
\label{L12}
Let $\pi_U$ be the probability vector corresponding to decoding the code $U$. For the channel error model of the code $U$ we have
\begin{eqnarray*}
\esp\left( \norm{\pi_{U}}^2 \right) & = & (1-p)^2 +2 p (2-p)\left(\frac{1-p}{2-p}\right)^2 +  \Oq =  \frac{(p+2)(p-1)^2}{2-p} +  \Oq
\end{eqnarray*}
\end{lemma}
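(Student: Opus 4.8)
The plan is to compute $\esp(\norm{\pi_U}^2)$ directly. Write $a \eqdef 1-p$ and $b\eqdef \frac{p}{q-1}$, so that the transition probability of the $q\hbox{-SC}_{p}$ from input $\alpha$ to output $\beta$ is $a$ when $\alpha=\beta$ and $b$ otherwise. After subtracting $(\mathbf 0\mid\vv)$, the $U$-decoding problem on coordinate $i$ amounts to observing the pair $(z_1(i),z_2(i))=(u(i)+e_1(i),\,u(i)+e_2(i))$, where $e_1(i),e_2(i)$ are drawn independently from the $q\hbox{-SC}_{p}$, and the corresponding reliability vector $\pi_U=((\Pi\times\Pi)^i(\alpha))_{\alpha\in\fq}$ is the normalization of $\alpha\mapsto \prob(z_1(i)\mid\alpha)\,\prob(z_2(i)\mid\alpha)$.

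First I would note that, by symmetry of the $q\hbox{-SC}_{p}$, the value of $\norm{\pi_U}^2$ depends on the received pair only through the event $\{z_1(i)=z_2(i)\}$: if $z_1(i)=z_2(i)$, the unnormalized vector takes the value $a^2$ at one coordinate and $b^2$ at the other $q-1$ coordinates, so $\norm{\pi_U}^2=\dfrac{a^4+(q-1)b^4}{\left(a^2+(q-1)b^2\right)^2}$; if $z_1(i)\neq z_2(i)$, it takes the value $ab$ at exactly two coordinates and $b^2$ at the remaining $q-2$ coordinates, so $\norm{\pi_U}^2=\dfrac{2a^2+(q-2)b^2}{\left(2a+(q-2)b\right)^2}$. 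In terms of the table, this simply regroups Case~$1$ together with the part of Case~$3$ where $e_1(i)=e_2(i)$, and Case~$2$ together with the part of Case~$3$ where $e_1(i)\neq e_2(i)$.

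Next I would compute the weight of each shape: $\prob(z_1(i)=z_2(i))=\prob(e_1(i)=e_2(i))=(1-p)^2+(q-1)\left(\frac{p}{q-1}\right)^2=(1-p)^2+\frac{p^2}{q-1}$, since $e_1(i)=e_2(i)$ occurs either when both errors are $0$ or when both equal a common nonzero field element. Therefore
$$
\esp\left(\norm{\pi_U}^2\right)=\left((1-p)^2+\tfrac{p^2}{q-1}\right)\frac{a^4+(q-1)b^4}{\left(a^2+(q-1)b^2\right)^2}+\left(1-(1-p)^2-\tfrac{p^2}{q-1}\right)\frac{2a^2+(q-2)b^2}{\left(2a+(q-2)b\right)^2}.
$$
As $q\to\infty$ we have $(q-1)b^2\to0$, $(q-1)b^4\to0$, $(q-2)b\to p$ and $(q-2)b^2\to0$, so the first fraction tends to $1$, the second to $\dfrac{2a^2}{(2a+p)^2}=\dfrac{2(1-p)^2}{(2-p)^2}$ (using $2a+p=2-p$), and the coefficient of the second term tends to $1-(1-p)^2=p(2-p)$. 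Hence $\esp(\norm{\pi_U}^2)=(1-p)^2+p(2-p)\cdot\dfrac{2(1-p)^2}{(2-p)^2}+\Oq=(1-p)^2+2p(2-p)\left(\dfrac{1-p}{2-p}\right)^2+\Oq$, and the elementary simplification $(1-p)^2\left(1+\frac{2p}{2-p}\right)=\frac{(p+2)(1-p)^2}{2-p}$ gives the closed form in the statement.

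For the $\Oq$ term I would observe that each of the two fractions above is a ratio of polynomials in $q$ that differs from its limit by $O(1/q)$ (here $0<p<1$ guarantees that the limiting denominators $a^4$ and $(2-p)^2$ are nonzero), while the probability weights are exact, so the whole expression differs from its limit by $O(1/q)$. The only mildly delicate point is carrying out the case analysis so that exactly these two shapes of the unnormalized reliability vector emerge and confirming the $1/q$ convergence rate; no conceptual obstacle arises, since the $U$-channel is symmetric (channels being closed under $\times$, as recalled from \cite{BB06}) and all the randomness is carried by the pair $(e_1(i),e_2(i))$, which has only finitely many types.
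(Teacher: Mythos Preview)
Your proof is correct and follows essentially the same case analysis as the paper, with a slightly tidier regrouping. The paper splits according to the error pattern (no error / one error / two errors) and in Case~$3$ simply remarks that the sub-case $e_1(i)=e_2(i)\neq 0$ has negligible probability; you instead split directly on the event $\{z_1(i)=z_2(i)\}$, which is exactly what determines the shape of the reliability vector, and thereby absorb that ``negligible'' sub-case into the first branch with an exact weight $(1-p)^2+\frac{p^2}{q-1}$. This buys you exact finite-$q$ expressions and a cleaner justification of the $\Oq$ remainder, but the underlying computation---two column shapes, one with a single dominant entry and one with two equal entries $\frac{1-p}{2-p}$---is the same as in the paper.
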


\begin{proof}
We will treat each case as a separate study.

\begin{enumerate}[ncases]
\item We have (up to permutation)
$$\Pi_{\mathbf y_1}^i \times \Pi_{\mathbf y_2}^i = \left(\begin{array}{c}
1-p \\ \frac{p}{q-1} \\ \vdots \\ \frac{p}{q-1}
\end{array}\right) \times \left(\begin{array}{c}
1-p \\ \frac{p}{q-1} \\ \vdots \\ \frac{p}{q-1}
\end{array}\right) = 
\left(\begin{array}{c}
\frac{(1-p)^2}{\beta}\\\frac{p^2}{(q-1)\beta}\\ \vdots \\ \frac{p^2}{(q-1)\beta}
\end{array}\right)=
\left(\begin{array}{c}
1-s \\ \frac{s}{q-1} \\ \vdots \\ \frac{s}{q-1}
\end{array}\right) = 
\left(\begin{array}{c} 1 \\ 0 \\ \vdots \\ 0\end{array}\right) + \oq$$
with $\beta = (1-p)^2 + (q-1)\frac{p^2}{(q-1)^2}$.
And, 
$s=\frac{p^2}{(1-p)^2(q-1)+p^2} = \Oq$

\item We have (up to permutation)

\begin{eqnarray*}
\Pi_{\mathbf y_1}^i \times \Pi_{\mathbf y_2}^i & = &   \left(\begin{array}{c}
1-p \\ \frac{p}{q-1} \\  \frac{p}{q-1}  \\ \vdots \\ \frac{p}{q-1}
\end{array}\right) \times \left(\begin{array}{c}
\frac{p}{q-1} \\ 1-p \\ \frac{p}{q-1} \\ \vdots \\ \frac{p}{q-1}
\end{array}\right) = 
\left(\begin{array}{c}
\frac{(1-p)p}{(q-1)\beta} \\ \frac{(1-p)p}{(q-1)\beta} \\ \frac{p^2}{(q-1)^2 \beta} \\ \vdots \\ 
\frac{p^2}{(q-1)^2 \beta}
\end{array}\right) = 
\left(\begin{array}{c}
\frac{1-p}{2-p}\\ \frac{1-p}{2-p} \\ \frac{p}{(q-1)(2-p)} \\ \vdots \\  \frac{p}{(q-1)(2-p)}
\end{array}\right) + \oq
\end{eqnarray*}
with 
$$\beta = 2\frac{(1-p)p}{(q-1)}+(q-2)\frac{p^2}{(q-1)^2} = \frac{p(2-p)}{(q-1)} + \Oq$$

\item The probability that the same error occurred at $\mathbf e_1(i)$ and $\mathbf e_2(i)$ is assummed to be negligible. Hence (up to permutation),
\begin{eqnarray*}
\Pi_{\mathbf y_1}^i \times \Pi_{\mathbf y_2}^i & = &  \left(\begin{array}{c}
\frac{p}{q-1}  \\ 1-p \\ \frac{p}{q-1} \\ \frac{p}{q-1} \\ \vdots \\ \frac{p}{q-1}
\end{array}\right) \times \left(\begin{array}{c}
\frac{p}{q-1} \\ \frac{p}{q-1} \\ 1-p \\ \frac{p}{q-1} \\ \vdots \\ \frac{p}{q-1}
\end{array}\right)  = 
\left(\begin{array}{c}
\frac{1-p}{2-p}\\ \frac{1-p}{2-p} \\ \frac{p}{(q-1)(2-p)} \\ \vdots \\  \frac{p}{(q-1)(2-p)}
\end{array}\right) + \oq 
\end{eqnarray*}
\end{enumerate}
Once we know all the columns of matrix $\Pi \times \Pi$, the result follows easily.
\end{proof}
Hence, the channel error model of the code $U$ is represented by the reliability matrix $\Pi_1 = \Pi \times \Pi$ and the expectation of the $L_2$ norm $\norm{\pi}^2$ of a column $\pi$ of $\Pi_1$ is given by 
$$\esp\left( \norm{\pi_{U}}^2 \right)
(1-p)^2 +2 p (2-p)\left(\frac{1-p}{2-p}\right)^2 +  \Oq
$$

\subsection{The matrix $\Pi_2=\Pi \oplus \Pi$ in the $q$-ary symmetric channel}

\begin{lemma}
\label{L11}
Let $\pi_V$ be the probability vector corresponding to decoding the code $V$. 
The channel error model of the code $V$ is a $q\hbox{-SC}_{p'}$ with $p'=2p-p^2$ and
\begin{eqnarray*}
\esp\left( \norm{\pi_{V}}^2 \right) &=& (1-p')^2 + \Oq = (1-p)^4 + \Oq.
\end{eqnarray*}
\end{lemma}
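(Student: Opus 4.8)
The plan is to identify the effective memoryless channel seen by the $V$-decoder, observe that it is essentially a $q\hbox{-SC}$, and then read off the value of $\esp(\norm{\pi_V}^2)$ from the Remark following Theorem~\ref{th:KVP}.

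First I would reduce to the zero codeword, so that $\yv_1=\ev_1$, $\yv_2=\ev_2$, and the $V$-decoder works on $\yv_2-\yv_1=\ev_2-\ev_1$. Since $U$ is an RS code, each coordinate $u(i)$ is uniform over $\fq$; a short computation with this fact shows that $y_1(i)$ alone is uniform and independent of $\bigl(v(i),\,y_2(i)-y_1(i)\bigr)$, so that $\prob(v(i)=\alpha\mid y_1(i),y_2(i))=\prob(v(i)=\alpha\mid y_2(i)-y_1(i))$. In other words, the channel relevant for decoding $V$ is the memoryless channel $v(i)\mapsto v(i)+\delta(i)$ with $\delta(i)\eqdef e_2(i)-e_1(i)$, and $\Pi_2=\Pi\oplus\Pi$ is exactly its reliability matrix; it therefore suffices to determine the law of $\delta(i)$.

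Next I would compute that law from the three cases listed at the start of Appendix~\ref{Appendix-A}. In the no-error case (probability $(1-p)^2$) we have $\delta(i)=0$; in the one-error case (probability $2p(1-p)$) exactly one of $e_1(i),e_2(i)$ is a uniform nonzero element and the other is $0$, so $\delta(i)$ is uniform over $\fq\setminus\{0\}$; in the two-error case (probability $p^2$) both are independent and uniform over $\fq\setminus\{0\}$, whence $\delta(i)=0$ with probability $\frac{1}{q-1}$ and $\delta(i)$ equals each fixed nonzero value with probability $\frac{q-2}{(q-1)^2}$. Collecting terms, $\prob(\delta(i)=0)=(1-p)^2+\frac{p^2}{q-1}$, while conditionally on $\delta(i)\neq0$ the value $\delta(i)$ is uniform over $\fq\setminus\{0\}$. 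Hence the channel for $V$ is genuinely a $q\hbox{-SC}$ with crossover probability $2p-p^2-\frac{p^2}{q-1}=(2p-p^2)+\Oq$, i.e. a $q\hbox{-SC}_{p'}$ with $p'=2p-p^2$ up to an $\Oq$ shift of the crossover parameter. (Alternatively, one may verify all of this by writing out the columns of $\Pi\oplus\Pi$ in each of the three cases, exactly as in the proof of Lemma~\ref{L12}.) Since for a $q\hbox{-SC}_{c}$ the reliability vector is deterministic with $\norm{\pi}^2=(1-c)^2+(q-1)\bigl(\tfrac{c}{q-1}\bigr)^2=(1-c)^2+\Oq$ (the Remark after Theorem~\ref{th:KVP}), plugging $c=p'+\Oq$ gives $\esp(\norm{\pi_V}^2)=(1-p')^2+\Oq$, and $1-p'=1-2p+p^2=(1-p)^2$ yields $(1-p)^4+\Oq$, as claimed.

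There is no substantial obstacle here; the only point needing care is the bookkeeping of the $\Oq$ terms in the two-error case — namely that two independent uniform elements of $\fq\setminus\{0\}$ coincide with probability $\frac{1}{q-1}$ and otherwise have a difference uniform over $\fq\setminus\{0\}$ — which is precisely what prevents the effective channel from being an exact $q\hbox{-SC}_{2p-p^2}$ and accounts for the $\Oq$ correction in the statement.
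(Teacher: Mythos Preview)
Your proposal is correct and follows essentially the same route as the paper: a three-case analysis on the error pattern at position $i$ showing that the effective channel for $V$ is a $q\hbox{-SC}_{p'}$ (up to an $\Oq$ perturbation of the crossover probability), after which the value of $\esp(\norm{\pi_V}^2)$ is immediate. The only cosmetic difference is that you compute the law of the effective error $\delta(i)=e_2(i)-e_1(i)$ directly, whereas the paper convolves the reliability columns $\Pi_{\yv_1}^i\oplus\Pi_{\yv_2}^i$ case by case and reads off the same $q\hbox{-SC}_{p'}$ shape; you even note this alternative yourself.
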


\begin{proof}
We will treat each case as a separate study.

\begin{enumerate}[ncases]
\item No error occurred in position $i$, i.e. $\Pi_{\mathbf y_1}^i = \Pi_{\mathbf y_2}^i$. Hence (up to permutation), 
$$\Pi_{\mathbf y_1}^i \oplus \Pi_{\mathbf y_2}^i =  
\left(\begin{array}{c}
1-p \\ \frac{p}{q-1} \\ \vdots \\ \frac{p}{q-1}
\end{array}\right) \oplus \left(\begin{array}{c}
1-p \\ \frac{p}{q-1} \\ \vdots \\ \frac{p}{q-1}
\end{array}\right) = \left(\begin{array}{c}
1-p' \\ \frac{p'}{q-1}  \\ \vdots \\ \frac{p'}{q-1} 
\end{array}\right)  + \oq$$
with $p'=2p-p^2$

\item One error occurred in position $i$, in other words, the order of the elements of $\Pi_{\mathbf y_1}^i$ and $\Pi_{\mathbf y_2}^i$ are different. Thus (up to permutation), 
$$\Pi_{\mathbf y_1}^i \oplus \Pi_{\mathbf y_2}^i =  \left(\begin{array}{c}
1-p \\ \frac{p}{q-1} \\  \frac{p}{q-1}  \\ \vdots \\ \frac{p}{q-1}
\end{array}\right) \oplus \left(\begin{array}{c}
\frac{p}{q-1} \\ 1-p \\ \frac{p}{q-1} \\ \vdots \\ \frac{p}{q-1}
\end{array}\right) = \left(\begin{array}{c}
\frac{p'}{q-1} \\ 1-p' \\ \frac{p'}{q-1} \\ \vdots \\ \frac{p'}{q-1}
\end{array}\right) + \oq$$
with $p'=2p-p^2$
\item We will have two options: either $\mathbf e_1(i)=\mathbf e_2(i)$ (similar to \ref{it:1}, up to permutation)  or $\mathbf e_1(i)\neq \mathbf e_2(i)$ (similar to \ref{it:2}, up to permutation). 
\end{enumerate}

Thus, the transition matrix $\Pi \oplus \Pi$ can be  represented as a $q\hbox{-SC}_{p'}$ with $p'=2p-p^2$.
\end{proof}

%By letting $q$ going to infinity, we recover the performance of the $\UV$ construction which works as soon as
%
%\begin{eqnarray*}
%R &< &
%\frac{\esp \left\{ \left\langle \Pi_1, \Pi_1\right\rangle \right\} 
%+ \esp \left\{ \left\langle \Pi_2, \Pi_2 \right\rangle \right\}}{2}=
%\frac{(1-p)^2+(1-p')^2}{2} + p(2-p)\left(\frac{1-p}{2-p}\right)^2\\
%& = & \frac{(p^3-4p^2+4p-4)(p-1)}{2(p-2)}
%\end{eqnarray*}

\section{Recursive application of the $\UV$ construction}
\label{Appendix-B}
In this appendix we study what happens over a $q$-ary symmetric channel with error probability $p$ ($q\hbox{-SC}_{p}$) if we apply recursively the $\UV$ construction.  That is, we start with a $\UV$ code, we choose $U$ to be a $\gUV{U_1}{V_1}$ code and $V$ to be a $\gUV{U_2}{V_2}$ code, where $U_1$, $U_2$, $V_1$ and $V_2$ are RS codes over the same alphabet $\mathbb F_q$ and  of the same length. In other words, we look for a code of the form
\begin{gather}
\AB =  \nonumber\\
\left\{\left( \uv_1 \mid \uv_1 + \vv_1 \mid \uv_1 + \uv_2 \mid \uv_1 + \uv_2 + \vv_1 + \vv_2 \right)~:~\uv_i \in U_i,\vv_i \in V_i \right\}.\nonumber 
\end{gather}

From Lemma \ref{L11} and \ref{L12} we will obtain the channel error models for decoding  $U_1$, $V_1$, $U_2$ and $V_2$ respectively, their reliability matrices are given by
$\Pi_1 \times \Pi_1$, $\Pi_1 \oplus \Pi_1$, $\Pi_2 \times \Pi_2$ and $\Pi_2 \oplus \Pi_2$ respectively (see Figure \ref{fig:channel models}). We use the previous notation $p' \eqdef 2p-p^2$.

\begin{figure}[h!]
\begin{center}
\begin{tikzpicture}[scale=.6]
\tikzstyle{level 1}=[sibling distance=60mm] 
\tikzstyle{level 2}=[sibling distance=30mm]
%[level/.style={sibling distance=40mm/#1}]
\node (z){$\Pi$}
  child {node (a) {$\Pi_1=\Pi \times \Pi$}
    child {node (b) {$\Pi_1\times \Pi_1$}}
    child {node (g) {$\Pi_1\oplus \Pi_1$}}
  }
  child {node (j) {$\Pi_2=\Pi \oplus \Pi$}
    child {node (k) {$\Pi_2\times  \Pi_2$}}
    child {node (l) {$\Pi_2 \oplus \Pi_2$}}
   };
\end{tikzpicture}
\caption{The different channel error models for $\AB$\label{fig:channel models}}
\end{center}
\end{figure}

Suppose we transmit the codeword 
$$\left( \uv_1 \mid \uv_1 + \vv_1 \mid \uv_1 + \uv_2 \mid \uv_1 + \uv_2 + \vv_1 + \vv_2 \right)$$ 
over a noisy channel and we receive the vector
$$\mathbf y = 
\left(\mathbf y_1 \mid \mathbf y_2 \mid \mathbf y_3 \mid \mathbf y_4\right) = 
\left( \uv_1 \mid \uv_1 + \vv_1 \mid \uv_1 + \uv_2 \mid \uv_1 + \uv_2 + \vv_1 + \vv_2 \right) + 
\left( \mathbf e_1 \mid \mathbf e_2 \mid \mathbf e_3  \mid \mathbf e_4\right)$$
We begin to observe that in the $q$-ary symmetric channel we have only the possibilities given in Table \ref{Table2}

\begin{table}[h!]
\begin{center}
\begin{tabular}{|c|c|c|}
\cline{2-3}
\multicolumn{1}{c|}{} & Result of the combination of ... & 
\begin{tabular}{c}Probability of \\ occurrence\end{tabular}\\
\hline
\begin{tabular}{c}
\textbf{Case} $4$ \\
(No errors)
\end{tabular}
& 
\begin{tabular}{c}
\ref{it:1} and \ref{it:1}\\
$e_j(i) = 0$ for all $j\in \{1, 2, 3, 4\}$ 
\end{tabular}
& $(1-p)^4$\\
\hline
\begin{tabular}{c}
\textbf{Case} $5$\\
($1$ error)
\end{tabular} & 
\begin{tabular}{c}
\ref{it:1} and \ref{it:2}\\
$\exists j_1\in \{1,2,3,4\}$ s.t. $e_{j_1}(i) \neq 0$ \\
And $e_j(i)=0$, otherwise. 
\end{tabular}
& $4p(1-p)^3$\\
\hline
\begin{tabular}{c}
\textbf{Case} $6$\\
($2$ errors)
\end{tabular} & 
\begin{tabular}{c}
\ref{it:1} and \ref{it:3}\\
$\exists j_1\in \{1,3\}$ s.t.
$e_{j_1}(i), e_{j_1+1}(i) \neq 0$ \\
And $e_j(i)=0$, otherwise. 
\end{tabular}
& $2p^2(1-p)^2$\\
\hline
\begin{tabular}{c}
\textbf{Case} $7$\\
($2$ errors)
\end{tabular} & 
\begin{tabular}{c}
\ref{it:2} and \ref{it:2}\\
$\exists j_1\in \{1,2\}$ and $j_2\in \{3,4\}$ s.t.
$e_{j_1}(i), e_{j_2}(i) \neq 0$ \\
And $e_j(i)=0$, otherwise. 
\end{tabular}
& $4p^2(1-p)^2$\\
\hline
\begin{tabular}{c}
\textbf{Case} $8$\\
($3$ errors)
\end{tabular} & 
\begin{tabular}{c}
\ref{it:2} and \ref{it:3}\\
$\exists j\in \{1,2,3,4\}$ s.t. $e_{j_1}(i) = 0$ \\
And $e_j(i)\neq 0$, otherwise. 
\end{tabular}
& $4p^3(1-p)$\\
\hline
\begin{tabular}{c}
\textbf{Case} $9$\\
($4$ errors)
\end{tabular} & 
\begin{tabular}{c}
\ref{it:3} and \ref{it:3}\\
$e_j(i) \neq 0$ for all $j\in \{1, 2, 3, 4\}$ 
\end{tabular}
& $p^4$\\
\hline
\end{tabular}
\caption{Ways of combining the vectors $y_1(i)$, $y_2(i)$, $y_3(i)$ and $y_4(i)$}
\label{Table2}
\end{center}
\end{table}

\subsection{The matrix $\Pi_2\oplus \Pi_2$ in the $q$-ary symmetric channel}

\begin{lemma}
Let $\pi_{V_2}$ be the probability vector corresponding to decoding the code $V_2$. The channel error model of the code $V_2$ is a $q\hbox{-SC}_{p''}$ with 
$p''=2p'-{p'}^2$ and
$$\esp\left( \norm{\pi_{V_2}}^2 \right) = (1-p")^2 + \Oq = (1-p)^8 + \Oq$$
\end{lemma}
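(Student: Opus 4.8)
The plan is to derive the statement from Lemma \ref{L11} applied twice, once at each level of the recursion. The point is that Lemma \ref{L11} is really an assertion about channels: $\oplus$-combining two independent copies of a $q\hbox{-SC}_{p}$ yields, up to an additive $\oq$ perturbation of the reliability columns, a $q\hbox{-SC}_{2p-p^2}$. I will use it first with parameter $p$ and then with parameter $p'=2p-p^2$.

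First I would unwind the recursion: $\AB$ is the $\UV$-construction whose $U$-component is the code $(U_1\mid U_1+V_1)$ and whose $V$-component is the code $(U_2\mid U_2+V_2)$. Hence, starting from the received word $(\yv_1\mid\yv_2\mid\yv_3\mid\yv_4)$, decoding $V_2$ proceeds by two coordinatewise subtractions: one first forms $\yv_3-\yv_1$ and $\yv_4-\yv_2$, which together make up a noisy version of $(\uv_2\mid\uv_2+\vv_2)$ in which each symbol has passed through the channel $\Pi_2=\Pi\oplus\Pi$; one then subtracts these two blocks to recover $\vv_2$, whose per-symbol channel is consequently $\Pi_2\oplus\Pi_2$, the rightmost leaf of Figure \ref{fig:channel models}. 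Because in the $q\hbox{-SC}_{p}$ the four error symbols $e_1(i),\dots,e_4(i)$ at a fixed coordinate are mutually independent, the two intermediate blocks $\yv_3-\yv_1$ and $\yv_4-\yv_2$ are independent, so $\Pi_2\oplus\Pi_2$ is genuinely the $\oplus$ of two independent copies of the channel $\Pi_2$.

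Next I would apply Lemma \ref{L11} with $\Pi_2$ in the role of the starting channel. By Lemma \ref{L11} the channel $\Pi_2$ is a $q\hbox{-SC}_{p'}$ with $p'=2p-p^2$ (up to $\oq$); rerunning the three-case computation of that lemma with $p'$ substituted for $p$ then shows that $\Pi_2\oplus\Pi_2$ is a $q\hbox{-SC}_{p''}$ with $p''=2p'-{p'}^2$ (again up to $\oq$). By the remark following Theorem \ref{th:KVP}, for a $q\hbox{-SC}_{p''}$ one has $\esp(\norm{\pi}^2)=(1-p'')^2+\frac{(p'')^2}{q-1}=(1-p'')^2+\oq$, whence $\esp(\norm{\pi_{V_2}}^2)=(1-p'')^2+\oq$. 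The claimed closed form then follows from the elementary identities $1-p''=(1-p')^2$ and $1-p'=(1-p)^2$, which give $(1-p'')^2=(1-p')^4=(1-p)^8$.

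I do not expect a real obstacle. The only subtlety is that one iterates $\oplus$ on a channel that is $q\hbox{-SC}_{p'}$ only up to an $\oq$ perturbation of its reliability columns, not exactly. This is harmless: $\oplus$ acts on reliability columns as a bilinear map whose coefficients are bounded by $1$, so it is Lipschitz in the $L_\infty$ norm in each argument, and replacing the true column of $\Pi_2$ by its idealized $q\hbox{-SC}_{p'}$ form changes each entry of $\Pi_2\oplus\Pi_2$ by $\oq$, hence changes $\esp(\norm{\pi_{V_2}}^2)$ by $\oq$ as well. This is precisely the bookkeeping already built into the convention — $\oq$ read as a vector of $L_\infty$-norm behaving like $1/q$ — fixed at the start of Appendix \ref{Appendix-A}.
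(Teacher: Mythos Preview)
Your proposal is correct and follows the same approach as the paper, which simply records the result as a ``direct consequence of Lemma \ref{L11}'' applied with $p'$ in place of $p$. You have merely made explicit the two points the paper leaves implicit: why the $V_2$-channel is $\Pi_2\oplus\Pi_2$, and why the $\oq$ perturbation in $\Pi_2$ propagates harmlessly through a second $\oplus$.
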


\begin{proof}
Direct consequence of Lemma \ref{L11}.
\end{proof}

\subsection{The matrix $\Pi_2\times \Pi_2$ in the $q$-ary symmetric channel}

\begin{lemma}
Let $\pi_{U_2}$ be the probability vector corresponding to decoding the code $U_2$. We have 
\begin{eqnarray*}
\esp\left( \norm{\pi_{U_2}}^2 \right) & = & \frac{(2+p')(1-p')^2}{(2-p')} +  \Oq
\end{eqnarray*}
\end{lemma}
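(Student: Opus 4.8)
The plan is to reduce the statement to Lemma~\ref{L12} by a change of the crossover parameter, in exactly the same way the preceding lemma on $\pi_{V_2}$ was obtained from Lemma~\ref{L11}.

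First I would invoke Lemma~\ref{L11}: the reliability matrix $\Pi_2 = \Pi \oplus \Pi$ is, up to an $\oq$ perturbation of each of its columns, the reliability matrix of a $q\hbox{-SC}_{p'}$ with $p' = 2p - p^2$. In the recursive scheme $U_2$ is decoded from the reliability matrix $\Pi_2 \times \Pi_2$. Indeed, writing $\AB$ as $(A\mid A+B)$ with $A \in (U_1\mid U_1+V_1)$ and $B \in (U_2\mid U_2+V_2)$, the inner block $B$ is received with reliability $(\Pi_2\mid \Pi_2)$, where the first inner half arises from the noise variables $\{e_1(i),e_3(i)\}$ and the second inner half from $\{e_2(i),e_4(i)\}$. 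These two index sets are disjoint, so the two columns fed to the $\times$-operation when decoding $U_2$ are independent, and each of them is distributed as the output column of a $q\hbox{-SC}_{p'}$. Hence the hypotheses of Lemma~\ref{L12} are met with $p$ replaced throughout by $p'$.

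Applying Lemma~\ref{L12} with parameter $p'$ then yields
$$\esp\left(\norm{\pi_{U_2}}^2\right) = (1-p')^2 + 2p'(2-p')\left(\frac{1-p'}{2-p'}\right)^2 + \Oq = \frac{(p'+2)(p'-1)^2}{2-p'} + \Oq,$$
which is the asserted expression since $(p'-1)^2 = (1-p')^2$ and $(p'+2) = (2+p')$.

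The only point requiring a little care — and the one I expect to be the main (minor) obstacle — is to verify that the $\oq$ discrepancy between $\Pi_2$ and a genuine $q\hbox{-SC}_{p'}$, together with the negligible ``two equal errors at the same coordinate'' event already encountered in the proof of Lemma~\ref{L11}, contributes only an $\Oq$ term to $\esp(\norm{\pi_{U_2}}^2)$. This is a routine stability argument: the quantity $\esp(\norm{\pi_{U_2}}^2)$ is obtained from the input column distributions by forming the $\times$-combined column via the defining formula for the $\times$-operation and then averaging the sum of squares of its entries; this map is a bounded rational function of the input entries, so an $\oq$ perturbation of the input columns produces only an $\Oq$ perturbation of the output, which is absorbed into the $\Oq$ already present in Lemma~\ref{L12}.
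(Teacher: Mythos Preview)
Your proposal is correct and follows exactly the same approach as the paper: the paper's entire proof consists of the sentence ``Since the transition matrix $\Pi \oplus \Pi$ can be represented as a $q\hbox{-SC}_{p'}$ with $p'=2p-p^2$, this case can be treated similar to Lemma~\ref{L12},'' which is precisely your reduction. Your additional remarks on the independence of the two input columns (via the disjoint error sets $\{e_1(i),e_3(i)\}$ and $\{e_2(i),e_4(i)\}$) and on the stability of the $\oq$ perturbation under the $\times$-operation are accurate and, if anything, make explicit what the paper leaves implicit.
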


\begin{proof}
Since the transition matrix $\Pi \oplus \Pi$ can be represented as a $q\hbox{-SC}_{p'}$ with $p'=2p-p^2$, this case can be treat similar to Lemma \ref{L12}.
\end{proof}

\subsection{The matrix $\Pi_1\oplus \Pi_1$ in the $q$-ary symmetric channel}

\begin{lemma}
Let $\pi_{V_1}$ be the probability vector corresponding to decoding the code $V_1$. We have 
$$\esp\left( \norm{\pi_{V_1}}^2 \right)=(1-p)^4\left(\frac{2+3p+8p^2-4p^3}{2-p}\right) +   \Oq$$
\end{lemma}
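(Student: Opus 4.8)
\emph{Setup.} The plan is to run, one recursion level deeper, the same case-analysis used in the proofs of Lemmas~\ref{L11} and~\ref{L12}. By Figure~\ref{fig:channel models} the reliability matrix handed to the $V_1$-decoder is $\Pi_1\oplus\Pi_1$ with $\Pi_1=\Pi\times\Pi$, so a generic column $\pi_{V_1}$ of it has the form $\pi_A\oplus\pi_B$, where $\pi_A$ and $\pi_B$ are two \emph{independent} columns of $\Pi\times\Pi$. Concretely, once the outer component $V=\gUV{U_2}{V_2}$ has been decoded and subtracted off of blocks~$3$ and~$4$, the symbol $u_1(i)$ is reconstructed from the $i$-th entries of blocks~$1$ and~$3$, hence its reliability vector is $\pi_A=\pi_{e_1(i)}\times\pi_{e_3(i)}$; the symbol $u_1(i)+v_1(i)$ is reconstructed from blocks~$2$ and~$4$, hence its reliability vector is $\pi_B=\pi_{e_2(i)}\times\pi_{e_4(i)}$; and $\pi_A,\pi_B$ are independent because $e_1(i),e_2(i),e_3(i),e_4(i)$ are. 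Finally $\pi_{V_1}=\pi_A\oplus\pi_B$ since $v_1(i)=(u_1(i)+v_1(i))-u_1(i)$.

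\emph{Reduction and case analysis.} Since $q\hbox{-SC}_p$ is symmetric and symmetric channels are stable under $\times$ and $\oplus$ (the fact exploited in Theorem~\ref{th:KVP}), the channel seen by the $V_1$-decoder is symmetric, so by Lemma~\ref{lem:simple_formula} it suffices to evaluate $\esp(\pi_{V_1}(0))$ after assuming all transmitted symbols $u_1(i),v_1(i),u_2(i),v_2(i)$ vanish. I would then split according to the six joint error patterns of Table~\ref{Table2} (Cases~$4$--$9$). In each pattern each of $\pi_A,\pi_B$ equals — up to a vector all of whose entries are $\oq$ — one of the three shapes already met in Lemma~\ref{L12}: the unit vector at the true symbol (when the two noise values feeding it coincide, in particular when both vanish), or a two-peaked vector $\tfrac{1-p}{2-p}(\delta_{\gamma_1}+\delta_{\gamma_2})$ whose peaks are those two noise values (so $0$ is a peak iff one of them is $0$). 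Forming $\pi_A\oplus\pi_B$ and reading its $0$-th coordinate returns $1$, $\tfrac{1-p}{2-p}$, $\big(\tfrac{1-p}{2-p}\big)^2$, or $\oq$, according to how many peaks of $\pi_A$ and of $\pi_B$ sit at $0$; multiplying by the probabilities of Table~\ref{Table2} and summing gives the announced value of $\esp(\norm{\pi_{V_1}}^2)$ up to $\oq$.

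\emph{Main obstacle.} The difficulty is entirely the bookkeeping: for each of the finitely many error patterns one must pin down which handful of coordinates of $\pi_A\oplus\pi_B$ carry non-negligible mass, compute the mass at $0$, and verify that the ``collision'' events — two such coordinates coinciding, or a peak of $\pi_A$ or $\pi_B$ accidentally falling on $0$, each of probability $O(1/q)$ — perturb the final average by at most $\oq$. This is precisely the argument already carried out in Lemma~\ref{L12}, but nesting $\times$ and then $\oplus$ makes the enumeration of sub-cases appreciably longer; independence of $\pi_A$ and $\pi_B$ (so that $\esp(\pi_{V_1}(0))=\langle\esp\,\pi_A,\esp\,\pi_B\rangle$) is the structural fact that keeps it tractable.
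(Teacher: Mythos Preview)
Your proposal is correct and follows essentially the same case-by-case analysis as the paper: for each error pattern of Table~\ref{Table2} determine the shape of the corresponding column of $\Pi_1\oplus\Pi_1$, then average. The only cosmetic differences are that you evaluate $\esp(\pi_{V_1}(0))$ via Lemma~\ref{lem:simple_formula} whereas the paper computes the full column and reads off $\norm{\pi}^2$, and that you (correctly, given the code layout) pair blocks $(1,3)$ and $(2,4)$ for the inner $\times$-step whereas the paper writes $(\Pi_{\yv_1}\times\Pi_{\yv_2})\oplus(\Pi_{\yv_3}\times\Pi_{\yv_4})$; by the i.i.d.\ symmetry of the four errors both pairings give the same distribution of $\norm{\pi_{V_1}}^2$, but note that Table~\ref{Table2}'s Cases~$6$ and~$7$ are phrased relative to the paper's pairing, so with yours their roles interchange. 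Your independence remark $\esp(\pi_{V_1}(0))=\langle\esp\,\pi_A,\esp\,\pi_B\rangle=\big(\esp(\norm{\pi_U}^2)\big)^2+\Oq$ is in fact a genuine shortcut the paper does not exploit: it lets you read the answer directly off Lemma~\ref{L12} without redoing the six cases.
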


\begin{proof}
We will treat each case as a separate study.
\begin{enumerate}[ncases, start=4]
\item We have that the column $\left(\Pi_{\mathbf y_1}^i \times \Pi_{\mathbf y_2}^i\right) \oplus 
\left( \Pi_{\mathbf y_3}^i \times \Pi_{\mathbf y_4}^i\right)$ is (up to permutation)
$$\left(\left( \begin{array}{c}1 \\ 0\\ \vdots \\ 0 \end{array} \right) + \oq \right)
\oplus 
\left(\left(\begin{array}{c}1 \\ 0 \\ \vdots \\ 0 \end{array} \right) + \oq \right) 
+ \oq= 
\left( \begin{array}{c}1  \\ 0 \\ \vdots \\ 0 \end{array}\right) + \oq$$
\item We have that the column $\left(\Pi_{\mathbf y_1}^i \times \Pi_{\mathbf y_2}^i\right) \oplus 
\left( \Pi_{\mathbf y_3}^i \times \Pi_{\mathbf y_4}^i\right)$ is (up to permutation)
$$
\left(\left( \begin{array}{c}1 + 0 \\ 0\\ \vdots \\ 0\end{array} \right)  + \oq\right)
\oplus
\left(
\left( \begin{array}{c}
\frac{1-p}{2-p} \\ \frac{1-p}{2-p} \\ \frac{p}{(q-1)(2-p)}  \\ \vdots \\ \frac{p}{(q-1)(2-p)} 
\end{array}\right) + \oq\right)
= \left( \begin{array}{c}
\frac{1-p}{2-p} \\ \frac{1-p}{2-p}  \\ 0 \\ \vdots \\ 0
\end{array}\right) + \oq
$$
\item Identical result to the above.
\item We have that the column $\left(\Pi_{\mathbf y_1}^i \times \Pi_{\mathbf y_2}^i\right) \oplus 
\left( \Pi_{\mathbf y_3}^i \times \Pi_{\mathbf y_4}^i\right)$ is (up to permutation)
\begin{eqnarray*}
\left(
\left( \begin{array}{c}
\frac{1-p}{2-p} \\ \frac{1-p}{2-p}  \\ \frac{p}{(q-1)(2-p)} \\ \frac{p}{(q-1)(2-p)}  \\ \vdots \\ \frac{p}{(q-1)(2-p)} 
\end{array}\right) + \oq \right)\oplus
\left(\left( \begin{array}{c}
\frac{1-p}{2-p} \\ \frac{p}{(q-1)(2-p)} \\ \frac{1-p}{2-p}  \\ \frac{p}{(q-1)(2-p)}  \\ \vdots \\ \frac{p}{(q-1)(2-p)}
\end{array}\right) +\oq\right) & = &
\left( \begin{array}{c}
\left(\frac{1-p}{2-p}\right)^2  \\ \left(\frac{1-p}{2-p}\right)^2  \\ \left(\frac{1-p}{2-p}\right)^2  \\ \left(\frac{1-p}{2-p}\right)^2  \\ 0 \\ \vdots \\ 0
\end{array}\right) + \oq
\end{eqnarray*}
\item Identical result to the above.
\item Identical result to the above.
\end{enumerate}
Once we know all the columns of matrix $\Pi_2 \oplus \Pi_2$, the result follows easily.
\end{proof}

\subsection{The matrix $\Pi_1 \times \Pi_1$ in the $q$-ary symmetric channel}

\begin{lemma}
Let $\pi_{U_1}$ be the probability vector corresponding to decoding the code $U_1$. We have 
\begin{eqnarray*}
\esp\left( \norm{\pi_{U_1}}^2 \right) &= & \frac{(5p^3 - 6p^2-5p-4)(1-p)^2}{4-3p} + \Oq
\end{eqnarray*}
\end{lemma}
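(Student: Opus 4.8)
The plan is to do a coordinate-wise case analysis on the error pattern, in the same spirit as the proofs of Lemmas \ref{L11} and \ref{L12}, but now for the channel model of $U_1$, whose reliability matrix is $\Pi_1\times\Pi_1$.

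As usual I would assume the all-zero codeword was transmitted and, as in the earlier lemmas of this appendix, that the lower-level decoders for $V$, $V_2$, $U_2$ and $V_1$ returned the correct (all-zero) words, so that every $v$-shift occurring in the nested $\oplus$ and $\times$ operations is trivial. Since the $\times$ operation amounts to coordinate-wise multiplication followed by a coordinate-independent renormalisation, all renormalisations can be deferred, giving for every position $i$
$$(\Pi_1\times\Pi_1)^i(\alpha)=\frac{\prod_{j=1}^{4}\Pi_{\mathbf y_j}^i(\alpha)}{\sum_{\beta\in\fq}\prod_{j=1}^{4}\Pi_{\mathbf y_j}^i(\beta)},$$
where each factor $\Pi_{\mathbf y_j}^i$ is, up to permutation, the $q\hbox{-SC}_{p}$ column $\left(1-p,\tfrac{p}{q-1},\dots,\tfrac{p}{q-1}\right)$ whose single large entry sits at $0$ if $e_j(i)=0$ and at $e_j(i)$ otherwise. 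Thus a coordinate $\alpha$ to which $m$ of the four blocks point gets, before renormalisation, weight $\Theta(q^{-(4-m)})$; discarding at cost $\oq$ the event that two of the nonzero $e_j(i)$ coincide, the error-free blocks all point to $0$ and the errored ones point to pairwise distinct nonzero values.

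I would then run through Cases $4$--$9$ of Table \ref{Table2}. In Cases $4$--$7$ at most two of the four blocks are errored, so at least two blocks point to $0$ while every other coordinate receives at most one pointer; hence $0$ is the unique coordinate of maximal weight (of order $\Theta(q^{-2})$ or larger) and after renormalisation $\pi=(1,0,\dots,0)+\oq$, so $\norm{\pi}^2=1+\oq$. In Cases $8$ and $9$ there are three, resp. four, errored blocks, so no coordinate receives more than one pointer and exactly four coordinates attain the maximal weight $(1-p)\bigl(\tfrac{p}{q-1}\bigr)^3$: the three or four distinct error positions $e_j(i)$, together with $0$ in Case $8$ (from the one error-free block). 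Dividing by the normalising sum $(4-3p)\bigl(\tfrac{p}{q-1}\bigr)^3(1+\oq)$, each of these four entries equals $\tfrac{1-p}{4-3p}+\oq$, so $\norm{\pi}^2=\tfrac{4(1-p)^2}{(4-3p)^2}+\oq$ in both cases.

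Finally, weighting by the probabilities of Table \ref{Table2},
$$\esp\!\left(\norm{\pi_{U_1}}^2\right)=\Big[(1-p)^4+4p(1-p)^3+6p^2(1-p)^2\Big]+\Big[4p^3(1-p)+p^4\Big]\frac{4(1-p)^2}{(4-3p)^2}+\oq,$$
and using $(1-p)^4+4p(1-p)^3+6p^2(1-p)^2=(1-p)^2(1+2p+3p^2)$ together with $4p^3(1-p)+p^4=p^3(4-3p)$, this collapses to $(1-p)^2\bigl(1+2p+3p^2+\tfrac{4p^3}{4-3p}\bigr)=\tfrac{(1-p)^2(4+5p+6p^2-5p^3)}{4-3p}+\oq$, as asserted. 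The step needing the most care is the middle one: seeing precisely how the leading $1/q$ order of the four-fold product jumps as the number of error-free blocks drops, that the product only becomes ``flat'' (four coordinates of equal weight) once three or more errors are present, and hence why Cases $8$ and $9$ contribute the same $\norm{\pi}^2$ --- together with the routine verifications that the two rounds of $v$-subtraction vanish for the all-zero word and that the error-coincidence events have probability $\oq$.
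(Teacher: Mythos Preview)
Your argument is correct and follows exactly the paper's approach: a case-by-case analysis over the error patterns of Table~\ref{Table2}, showing that Cases 4--7 give $\norm{\pi}^2=1+\oq$ while Cases 8 and 9 give $4\bigl(\tfrac{1-p}{4-3p}\bigr)^2+\oq$, followed by the probability-weighted average. Note only that your final numerator $4+5p+6p^2-5p^3$ is the negative of the polynomial printed in the lemma; yours is the correct sign (the stated formula is negative at $p=0$), so ``as asserted'' should be read modulo this evident typo in the statement.
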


\begin{proof}
We study separately three different cases:
\begin{itemize}
\item \textbf{Case} $4$, \textbf{Case} $5$ and \textbf{Case} $6$. In all these cases the columns of the transition matrix have the same form (up to permutation).
$$\Pi_{\mathbf y_1}^i \times \Pi_{\mathbf y_2}^i \times \Pi_{\mathbf y_3}^i \times \Pi_{\mathbf y_4}^i  = \left(\begin{array}{c} 1  \\ 0 \\ \vdots \\ 0  \end{array}\right) + \oq 
$$
\item \textbf{Case} $7$ We have that the column $\Pi_{\mathbf y_1}^i \times \Pi_{\mathbf y_2}^i \times \Pi_{\mathbf y_3}^i \times \Pi_{\mathbf y_4}^i$ is (up to permutation)

\begin{eqnarray*}
\hspace{-0.5cm}
 \left( \left(\begin{array}{c} 
\frac{1-p}{2-p} \\ \frac{1-p}{2-p}  \\ \frac{p}{(q-1)(2-p)}  \\ \frac{p}{(q-1)(2-p)}  \\ \vdots \\\frac{p}{(q-1)(2-p)}  \end{array}\right) + \oq \right) \times 
\left( \left(\begin{array}{c} 
\frac{1-p}{2-p} \\ \frac{p}{(q-1)(2-p)}  \\  \frac{1-p}{2-p}  \\ \frac{p}{(q-1)(2-p)}  \\ \vdots \\\frac{p}{(q-1)(2-p)}  \end{array}\right)  + \oq \right)
& = & \left( \begin{array}{c}
\left(\frac{1-p}{2-p}\right)^2 \frac{1}{\alpha}  \\
\frac{(1-p)p}{(2-p)^2(q-1)} \frac{1}{\alpha} \\
\frac{(1-p)p}{(2-p)^2(q-1)} \frac{1}{\alpha} \\
\frac{p^2}{(q-1)^2(2-p)^2}\frac{1}{\alpha} \\
\vdots \\
\frac{p^2}{(q-1)^2(2-p)^2}\frac{1}{\alpha} \\
\end{array}\right)  + \oq \\
 & = & 
\left(\begin{array}{c} 1 \\ 0 \\ \vdots \\ 0  \end{array}\right) + \oq
\end{eqnarray*}

$$\hbox{ with }\alpha = \left( \frac{1-p}{2-p}\right)^2 + 2\frac{(1-p)p}{(2-p)^2(q-1)} + (q-3)\frac{p^2}{(q-1)^2(2-p)^2} = \left(\frac{1-p}{2-p}\right)^2 + \Oq$$
\item \textbf{Case} $8$ We have that the column $\Pi_{\mathbf y_1}^i \times \Pi_{\mathbf y_2}^i \times \Pi_{\mathbf y_3}^i \times \Pi_{\mathbf y_4}^i$ is (up to permutation)

\begin{eqnarray*}
\hspace{-0.5cm}
\left(\left(\begin{array}{c} 
\frac{1-p}{2-p} \\ \frac{1-p}{2-p} \\ \frac{p}{(q-1)(2-p)} \\ \frac{p}{(q-1)(2-p)}  \\  \frac{p}{(q-1)(2-p)}  \\  \vdots \\\frac{p}{(q-1)(2-p)}  \end{array}\right) + \oq \right)\times 
\left(\left(\begin{array}{c} 
 \frac{p}{(q-1)(2-p)} \\   \frac{p}{(q-1)(2-p)} \\ \frac{1-p}{2-p} \\ \frac{1-p}{2-p} \\ \frac{p}{(q-1)(2-p)} \\ \vdots \\\frac{p}{(q-1)(2-p)}  \end{array}\right) + \oq\right)  & = & 
\left( \begin{array}{c}
\frac{(1-p)p}{(2-p)^2(q-1)} \frac{1}{\alpha}  \\
\frac{(1-p)p}{(2-p)^2(q-1)} \frac{1}{\alpha}  \\
\frac{(1-p)p}{(2-p)^2(q-1)} \frac{1}{\alpha}  \\
\frac{(1-p)p}{(2-p)^2(q-1)} \frac{1}{\alpha}  \\
\frac{p^2}{(q-1)^2(2-p)^2}\frac{1}{\alpha} \\
\vdots \\
\frac{p^2}{(q-1)^2(2-p)^2}\frac{1}{\alpha} \\
\end{array}\right) + \oq \\
& = &  
\left(\begin{array}{c} 
\frac{1-p}{4-3p} \\ 
\frac{1-p}{4-3p} \\ 
\frac{1-p}{4-3p} \\ 
\frac{1-p}{4-3p} \\ 
0 \\ \vdots \\ 0  \end{array}\right) + \oq
\end{eqnarray*}
with $$\alpha = 4 \frac{(1-p)p}{(2-p)^2(q-1)} + (q-4)\frac{p^2}{(q-1)^2(2-p)^2} = 
\frac{p(4-3p)}{(2-p)^2(q-1)} + \Oq$$

\item \textbf{Case} $9$ Identical result to the above. 
\end{itemize}
Once we know all the columns of matrix $\Pi_2 \times \Pi_2$, the result follows easily.
\end{proof}

\end{document}